\documentclass[journal]{IEEEtran}
\usepackage{amsmath,amsfonts,amsthm,amssymb}
\usepackage{algorithm}
\usepackage{algorithmicx}
\usepackage{algpseudocode}
\usepackage{url}
\usepackage{subfigure}
\usepackage{multirow}
\usepackage{caption}
\usepackage{graphicx}
\usepackage[colorlinks=true, citecolor=blue, linkcolor=blue, urlcolor=blue]{hyperref}
\usepackage{cite}
\usepackage{orcidlink}
\usepackage{booktabs}
\usepackage{xcolor}

\newcommand{\R}{\mathbb{R}}
\newcommand{\C}{\mathbb{C}}
\newcommand{\st}{\;\text{s.t.}\;}
\newcommand{\sign}{\operatorname{sign}}
\newcommand{\T}{\mathrm{T}}
\renewcommand{\H}{{\mathrm{H}}}

\newcommand{\bc}{\mathbf{c}}
\newcommand{\bx}{\mathbf{x}}
\newcommand{\by}{\mathbf{y}}
\newcommand{\ba}{\boldsymbol{a}}
\newcommand{\bh}{\mathbf{h}}
\newcommand{\bq}{\mathbf{q}}

\newcommand{\bu}{\mathbf{u}}
\newcommand{\bv}{\mathbf{v}}
\newcommand{\bw}{\mathbf{w}}
\newcommand{\bz}{\mathbf{z}}
\newcommand{\bX}{\mathbf{X}}
\newcommand{\bA}{\mathbf{A}}
\newcommand{\bC}{\mathbf{C}}
\newcommand{\bD}{\mathbf{D}}
\newcommand{\bH}{\mathbf{H}}

\DeclareMathOperator*{\argmin}{arg\,min}

\newtheorem{theorem}{Theorem}
\newtheorem{lemma}[theorem]{Lemma}

\newtheorem{proposition}[theorem]{Proposition}

\begin{document}

\title{Cram\'er--Rao Bound Optimization for Massive MIMO \\ DFRC Systems with 1-Bit DACs and ADCs}

\author{Chenfei Huang,~\IEEEmembership{Graduate Student Member,~IEEE},
Mingjie Shao,~\IEEEmembership{Member,~IEEE},
\\ and Ya-Feng Liu,~\IEEEmembership{Senior Member,~IEEE}

\thanks{Chenfei Huang and Mingjie Shao are with the State Key Laboratory of Mathematical Sciences, Academy of Mathematics and Systems Science, Chinese Academy of Sciences, Beijing 100190, China (email: huangchenfei@lsec.cc.ac.cn; mingjieshao@amss.ac.cn).}
\thanks{Ya-Feng Liu is with the Ministry of Education Key Laboratory of Mathematics and Information Networks, School of Mathematical Sciences, Beijing University of Posts and Telecommunications, Beijing 102206, China (e-mail: yafengliu@bupt.edu.cn).}
}

\maketitle

\begin{abstract}

In this paper, we investigate the dual-function radar-communication (DFRC) design for massive multiple-input multiple-output (MIMO) systems equipped with 1-bit digital-to-analog converters (DACs) at the transmitter and 1-bit analog-to-digital converters (ADCs) at the receiver, motivated by the need for low-cost and power-efficient implementations of massive MIMO systems. 
We consider a downlink scenario where the transmit signal matrix is optimized to enhance sensing performance while satisfying communication quality of service (QoS) requirements. 
Specifically, the objective is to minimize the 1-bit Cram\'er–Rao bound (CRB) for estimating the azimuth angle of a point-like target under symbol-level constructive interference (CI) constraints. 
We conduct an asymptotic analysis of the 1-bit Fisher information, revealing its nonmonotonicity with the signal-to-noise ratio (SNR), and introduce amplitude constraints to exclude regions where the objective function value is clearly suboptimal and facilitate convergence to high-quality solutions. 
The resulting problem is a nonconvex optimization challenge with coupled binary and linear constraints. 
We transform the discrete problem into a continuous constrained one, characterize its global and local minima, and tackle it via the augmented Lagrangian method (ALM) and a spectral projected gradient (SPG) method combined with nonmonotone line search. 
The solution is further refined via local search and cutting-plane techniques. 
Extensive numerical experiments verify our analysis, showing that the proposed approach exhibits promising DFRC performance compared to benchmark schemes.
\end{abstract}

\begin{IEEEkeywords}
Dual-function radar-communication, multiple-input multiple-output, Cram\'er--Rao bound, 1-bit quantization, constructive interference.
\end{IEEEkeywords}

\section{Introduction}
\IEEEPARstart{T}{he} integration of radar and communication functionalities, referred to as dual-function radar-communication (DFRC) systems, represents an emerging research direction for joint hardware and spectrum sharing \cite{liu2018toward, hassanien2019dual}.
By leveraging radio-frequency (RF) signals to simultaneously perform radar sensing and communication transmission, DFRC systems can effectively support concurrent data transmission and environmental sensing capabilities, including target detection, localization, and tracking \cite{liu2022integrated}.
The implementation of DFRC systems often relies on massive multiple-input multiple-output (MIMO) techniques, which deploy a large number of antennas to improve both spectral efficiency and spatial resolution. 
However, the conventional fully digital massive MIMO architectures demand one dedicated RF chain along with a pair of digital-to-analog converters (DACs) or analog-to-digital converters (ADCs) for each antenna element. 
The power consumption of data converters (i.e., ADCs and DACs) scales exponentially with the number of quantization bits \cite{walden1999analog, allen2011cmos}. 
In addition, the high dynamic range inherent to high-resolution ADCs and DACs imposes an additional constraint on the RF chain, requiring it to maintain a large dynamic range to ensure signal integrity. 
Such high-performance RF chains are typically accompanied with high power consumption and high hardware costs.
These constraints result in prohibitive hardware complexity and power consumption for massive MIMO DFRC implementations. 

The adoption of 1-bit DACs and/or ADCs to replace their high-resolution counterparts has been recognized as a promising approach for reducing power consumption and hardware complexity in massive MIMO systems.
This technique has been extensively investigated in conventional communication systems \cite{liu2024survey, saxena2017analysis, sohrabi2018one, wang2018finite, shao2019framework, wu2024efficient2, choi2016near, mollen2017uplink, shao2024accelerated}, including the 1-bit MIMO precoding in the downlink and 1-bit channel estimation and signal detection in the uplink.
In the downlink, the main task is to design 1-bit discrete transmit signals to control the quantization distortion for optimizing the communication quality of service (QoS) metrics \cite{saxena2017analysis, sohrabi2018one, wang2018finite, shao2019framework, wu2024efficient2}; while in the uplink, it amounts to accounting for the quantization structure in developing high-performance estimation and detection algorithms \cite{choi2016near, mollen2017uplink, shao2024accelerated}.

However, the integration of 1-bit DACs and ADCs in DFRC MIMO systems poses unique challenges. 
Specifically, the 1-bit transmit signals must not only be designed to satisfy communication QoS requirements but also to simultaneously synthesize the desired sensing waveform.
Furthermore, the generated 1-bit sensing waveform will interact with the 1-bit quantization process at the receive antennas, which necessitates a joint design to address the 1-bit quantization effects at both the transmit and receive ends. This paper investigates the   precoding design problem for massive MIMO DFRC systems with both 1-bit DACs and ADCs.

\subsection{Related Works}
Most prior studies on MIMO DFRC system design assume high-resolution MIMO architectures. DFRC design methodologies can be broadly categorized into three types: communication-centric \cite{ni2023uplink, keskin2021mimo}, radar-centric \cite{nowak2016co, roberton2003, saddik2007ultra}, and joint design \cite{liu2021cramer, liu2022transmit, wen2023efficient, wu2024efficient, liu2018toward, yu2022precoding, tsinos2021joint, wu2025quantized, wang2025interference, liu2021dual, an2023fundamental, xu2024mimo}. Among them, joint design strategies optimize waveforms to support dual functions and typically exhibit better spectral efficiency or sensing performance than the other two paradigms, depending on the specific performance metrics involved.
For communication, representative metrics include signal-to-interference-plus-noise ratio (SINR) \cite{liu2021cramer, liu2022transmit, wen2023efficient}, sum rate \cite{wu2024efficient}, multiuser interference (MUI) energy \cite{liu2018toward, yu2022precoding, tsinos2021joint}, and constructive interference (CI) \cite{wu2025quantized, wang2025interference, liu2021dual}.
For radar sensing, widely adopted metrics include beampattern mean squared error (MSE) \cite{wu2025quantized, liu2021dual}, covariance matrix similarity \cite{liu2018toward, liu2022transmit, yu2022precoding}, radar signal-to-noise ratio (SNR) \cite{wen2023efficient, tsinos2021joint}, detection/false alarm probability \cite{an2023fundamental}, and Cram\'er--Rao bound (CRB) \cite{liu2021cramer, wu2024efficient, xu2024mimo}.

There is growing research interest in MIMO DFRC systems equipped with low-resolution DACs at the transmit antennas and high-resolution ADCs at the receive antennas \cite{wu2025quantized, yu2022precoding, cheng2021transmit, wang2025interference}.  
Note that many performance metrics applicable to high-resolution MIMO scenarios, such as SINR and sum rate, are no longer viable due to coarse quantization effects.
The works in \cite{yu2022precoding, cheng2021transmit} minimize MUI energy for communication, while constraining the dissimilarity between the transmit and desired sensing covariance matrices \cite{yu2022precoding} and CRB \cite{cheng2021transmit}. 
In \cite{wu2025quantized}, the authors studied the minimization of the beampattern MSE under CI-based communication QoS constraints. 
In \cite{wang2025interference}, a weighted objective function comprising the illumination power in the directions of interest and the minimum CI scaling factor is optimized. 

More recently, the works \cite{wang2023joint} and \cite{sun2026one} investigate MIMO DFRC system design with both 1-bit DACs and ADCs, seeking to approximate the quantization effect of 1-bit ADCs.
Specifically, the work \cite{wang2023joint} maximizes the probability of detection while satisfying CI-based communication QoS constraints; the nonlinear effect induced by 1-bit ADCs is approximated via a first-order Taylor expansion, which yields reasonable approximation performance at low radar SNR.
In addition, the work in \cite{sun2026one} approximates the quantization effect by additive noise and maximizes the signal-to-quantization-plus-clutter-plus-noise ratio for target sensing.

\subsection{Our Contributions}
In this paper, we investigate the design of massive MIMO DFRC systems equipped with 1-bit DACs and ADCs. 
We directly optimize the sensing CRB under 1-bit ADCs rather than relying on approximations, while ensuring the communication QoS exceeds a predefined level. 
In our formulation, we analyze the asymptotic behavior of the 1-bit Fisher information and transform the design into a more tractable counterpart. 
We also propose a dedicated optimization method to address the nonconvex design problem under coupled binary and linear constraints. 
The main contributions of this paper are summarized as follows: 

\begin{itemize}
    \item \textit{Joint 1-bit CRB and CI DFRC formulation}. 
We formulate the system design problem for massive MIMO DFRC systems with both 1-bit DACs and ADCs.
In particular, we directly tackle the CRB for estimating the target azimuth angle to address the quantization effect of 1-bit ADCs, instead of approximating the quantization effect as noise. 
We adopt the CI principle, a symbol-level precoding approach, as the communication performance metric.  
Compared with conventional block-level precoding schemes such as zero-forcing, CI enables finer-grained control over transmit signal amplitudes, which is more compatible with the binary constraints imposed by 1-bit DACs at the symbol-level. 
As a result, our design aims to optimize the 1-bit CRB under constraints on the symbol error probability (SEP) of CI, together with the binary transmit signal constraints.

	\item \textit{Asymptotic analysis of the 1-bit CRB and problem reformulation}. 
We present an asymptotic analysis of the 1-bit Fisher information, which is the inverse of the 1-bit CRB. 
The results show that the 1-bit Fisher information is not monotonically increasing with the SNR, which is different from the Fisher information without quantization. 
In light of this revelation, we impose amplitude constraints on the design problem to exclude regions where the objective function value is clearly suboptimal; this reduces the solution scope of the problem and facilitates convergence to high-quality solutions. 
Moreover, within these amplitude constraints, we show that the 1-bit Fisher information can be well approximated by the Fisher information without quantization, and the latter has a simpler quadratic form. 
In addition, we find that the overall problem can be decoupled into independent subproblem instances with smaller dimensions across the time index, which facilitates the development of computationally efficient optimization algorithms.

	\item \textit{Efficient algorithm for the resulting problem}. 
The resulting problem is to minimize a concave quadratic function subject to coupled binary and linear constraints, which poses an optimization challenge. 
We develop an efficient optimization strategy to address this challenge. 
Specifically, we transform the discrete problem into a continuous one by introducing a negative quadratic penalty term into the objective function and relaxing the binary constraints to box constraints. 
We then analyze the relationship between the penalized problem and the original formulation in terms of their global and local optima, showing that as long as the continuous penalized problem is solved to a local minimum, most elements of the solution will be binary. 
We adopt the augmented Lagrangian method (ALM) to tackle the penalized problem, where each subproblem is handled by a nonmonotone spectral projected gradient (SPG) method. Finally, the solution is refined via local search and cutting-plane techniques to better satisfy feasibility and enhance performance.
\end{itemize}

We conduct extensive numerical experiments to verify our analysis and evaluate the performance of the proposed approach. 
The results demonstrate that the introduction of amplitude constraints indeed yields solutions with superior performance, and the number of non-binary elements in the solution is consistent with our analytical bound. 
Compared with benchmark schemes, the proposed approach exhibits promising DFRC performance. 
In particular, in our test case operating at a radar SNR of 8 dB, our approach achieves sensing performance comparable to that of a system equipped with 1-bit DACs and high-resolution ADCs operating at 5 dB.

\subsection{Organization and Notations}
The rest of this paper is organized as follows. In Section \ref{section:system model}, we introduce the system model. In Section \ref{section:problem formulation}, we analyze the asymptotic behavior of the 1-bit Fisher information and discuss problem formulation. In Section \ref{section:algorithm design}, we present the proposed algorithm. In Section \ref{section:numerical results}, simulation results are provided to present the performance of the proposed design. Finally, in Section \ref{section:conclusion}, we conclude the paper.

\textit{Notations}: Scalars are denoted by Italic type (e.g., $K$), vectors by bold lowercase letters (e.g., $\bx$), and matrices by bold uppercase letters (e.g., $\bH$). The imaginary unit is denoted by $j=\sqrt{-1}$. The operators $(\cdot)^{\T}$ and $(\cdot)^{\H}$ denote the transpose and Hermitian transpose, respectively. 
The operators $\Re(\cdot)$ and $\Im(\cdot)$ return the real and imaginary parts of their arguments, respectively. 
The notation $\|\bx\|$ denotes the $\ell_2$ norm of the vector $\bx$. 
For a positive integer $n$, $[n]$ represents the set $\{1,2,\ldots,n\}$. 
The vectors $\mathbf{0}$ and $\mathbf{1}$ denote the all-zero vector and the all-one vector of appropriate dimensions, respectively, and $\mathbf{I}$ denotes the identity matrix of appropriate dimension. 
For vectors $\bx$ and $\by$, $\max\{\bx,\by\}$ ($\min\{\bx,\by\}$) denotes the component-wise maximum (minimum) of the two vectors. For a vector $\bx$ and a closed convex set $\mathcal{X}$, $\mathcal{P}_{\mathcal{X}}(\bx)$ denotes the projection of $\bx$ onto $\mathcal{X}$. Finally, $\mathcal{CN}(\mathbf{0},\bf \Sigma)$ denotes the zero-mean circularly symmetric complex Gaussian distribution with covariance matrix $\bf \Sigma$.

\section{System Model}
\label{section:system model}

\begin{figure}[t]
	\centering
	\includegraphics[width=.45\textwidth]{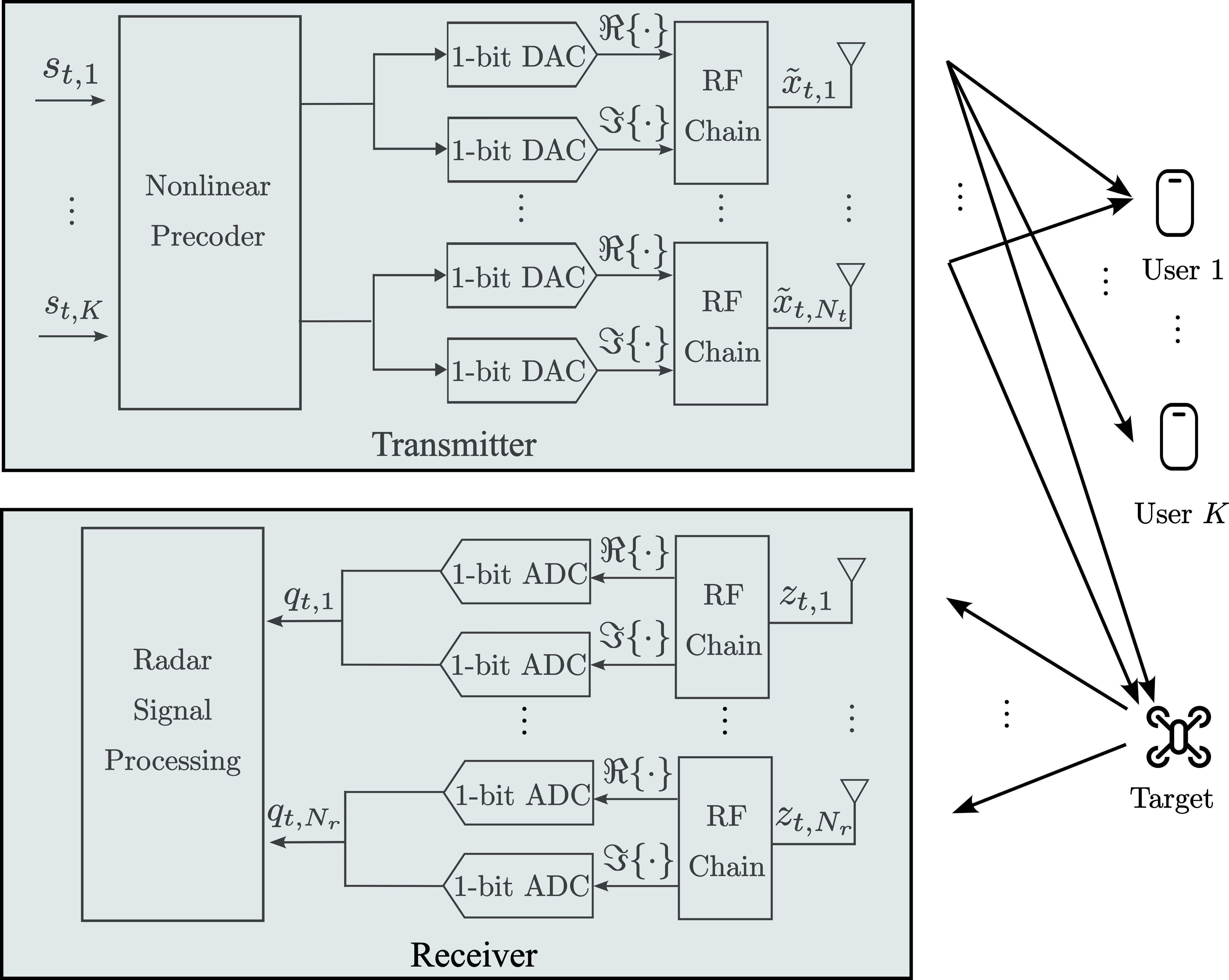}
	\caption{A massive MIMO DFRC system equipped with 1-bit DACs and ADCs.}
	\label{fig:MIMO DFRC system}
\end{figure}

We consider a  massive MIMO DFRC system, where a base station (BS) equipped with $N_t$ transmit antennas and $N_r$ receive antennas simultaneously serves $K$ single-antenna communication users while sensing a point-like radar target over the same time-frequency resources, as illustrated in Fig.~\ref{fig:MIMO DFRC system}. Both the transmit and receive antennas are arranged as uniform linear arrays (ULAs) with an inter-antenna spacing of half the carrier wavelength. Each transmit antenna is equipped with a pair of 1-bit DACs, and each receive antenna is equipped with a pair of 1-bit ADCs.

\subsection{Communication Model}
Let $\bX=[\tilde{\bx}_1,\tilde{\bx}_2,\ldots,\tilde{\bx}_T]\in \C^{N_t\times T}$ denote the dual-functional transmit signal matrix over a block of length $T$, where $\tilde{\bx}_t$ is the transmit signal vector at time slot $t$. Let $P$ denote the total transmit power budget at the BS. To comply with the 1-bit DAC hardware constraints, the transmit signal is restricted to the following discrete set:
\begin{equation*}
	\tilde{\bx}_{t} \in \mathcal{\tilde{X}} := \left\{\pm \eta \pm j \eta  \right\}^{N_t}, \forall~t \in [T],
\end{equation*}
where $\eta=\sqrt{P/(2N_t)}$.

The received signal at user $k$ at time slot $t$ is given by
\begin{equation*}
	y_{t,k}= \bh_k^{\T} \tilde{\bx}_t + n_{t,k},
\end{equation*}
where $\bh_k\in\C^{N_t}$ denotes the downlink channel vector from the BS to user $k$, and $n_{t,k}\sim\mathcal{CN}(0,\sigma_c^2)$ represents the additive white Gaussian noise (AWGN).

The design objective for communication is to minimize the  SEP  at the multiuser sides. 
Let $s_{t,k}$ denote the intended data symbol for user $k$ at time slot $t$. 
We assume that the user symbols $s_{t,k}$'s are drawn from an $M$-PSK constellation, and the communication users adopt nearest-neighbor decoding.  
To harness the structure of the $M$-PSK constellation, we adopt the CI principle \cite{masouros2015exploiting}, whereby the received signal $y_{t,k}$ is deliberately pushed deeper into the decision region of $s_{t,k}$.  

\begin{figure}[t]
	\centering
	\includegraphics[width=.36\textwidth]{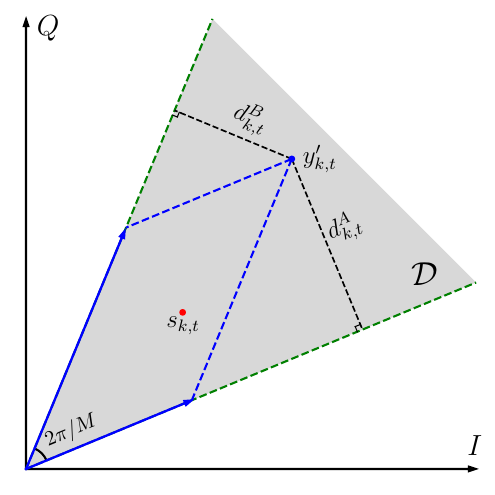}
	\caption{An illustration of constructive interference, where the shaded area $\cal D$ denotes the decision region of intended symbol.}
	\label{fig:CI}
\end{figure}

We illustrate the concept of CI in Fig.~\ref{fig:CI} using an example with $M=8$. Let $y'_{t,k}=\bh_k^{\T} \tilde{\bx}_t$ denote the noise-free received signal at user $k$. 
It has been shown in \cite{wu2024efficient2, wu2023diversity} that the distances $d_{t,k}^A, d_{t,k}^B$ between $y'_{t,k}$ and the decision boundaries of the symbol $s_{t,k}$ are related to the SEP, defined as $\mathrm{SEP}_{t,k}=\mathrm{P}(y_{t,k} \notin \mathcal{D}(s_{t,k})|s_{t,k})$, which is shown as follows:
	\begin{equation}
		\label{eq:SEP upper bound}
		\frac{1}{2}\mathrm{erfc}\left( \frac{d_{t,k}}{\sigma_c} \right) \leq \mathrm{SEP}_{t,k} \leq \mathrm{erfc}\left(\frac{d_{t,k}}{\sigma_c} \right),
	\end{equation}
where $d_{t,k} = \min\{d_{t,k}^A, d_{t,k}^B\}$ and $ \mathrm{erfc}(z)=1-\frac{2}{\sqrt{\pi}} \int_{0}^{z} e^{-t^2}\mathrm{d}t$ 
is the complementary error function. 

According to \eqref{eq:SEP upper bound}, maintaining the $\mathrm{SEP}_{t,k}$ below $\epsilon$ can be ensured by enforcing $d_{t,k} \geq \sigma_c\text{erfc}^{-1}(\epsilon)$. Thus, to guarantee that all users' SEPs do not exceed $\epsilon$, we can impose the following constraint:
\begin{equation}
	\label{eq:CI constraint}
	\bC_{t}\bx_t \geq \boldsymbol{\gamma},~\forall~t\in [T],
\end{equation}
where 
\begin{equation*}
	\bC_{t,k} = \begin{bmatrix}
			\Im(s_{t,k}^B) & -\Re(s_{t,k}^B) \\
			-\Im(s_{t,k}^A) & \Re(s_{t,k}^A)
		\end{bmatrix}
		\begin{bmatrix}
			\Re(\bh_k^{\T}) & -\Im(\bh_k^{\T}) \\
			\Im(\bh_k^{\T}) & \Re(\bh_k^{\T})
		\end{bmatrix},
\end{equation*}
$\boldsymbol{\gamma}=\sigma_c \text{erfc}^{-1}(\epsilon) \cdot\mathbf{1}$ and $\bx_t=[\Re(\tilde{\bx})^{\T}, \Im(\tilde{\bx})^{\T}]^{\T}$ is the real representation of $\tilde{\bx}_t$.

\subsection{Radar Model}
We assume line-of-sight (LOS) propagation between the radar target and the receive antennas. With the transmission signal $\tilde{\bx}_{t}$, the received signal under 1-bit ADCs can be expressed as
\begin{equation*}
	\bq_t=\mathcal{Q}(\bz_t),~\bz_t=\beta \ba_r\ba_t^{\H}\tilde{\bx}_t+\bw_t,
\end{equation*}
where $\mathcal{Q}(\cdot)=\sign(\Re(\cdot))+j\sign(\Im(\cdot))$ is the 1-bit quantization function, $\beta\in \C$ is the radar reflection coefficient, $\ba_r~\text{and}~\ba_t$ are the steering vectors of receive and transmit antennas, respectively, with
$$
	\ba_r=\left[ e^{-j\frac{N_r-1}{2}\pi\sin(\theta)}, e^{-j\frac{N_r-3}{2}\pi\sin(\theta)},\ldots,e^{j\frac{N_r-1}{2}\pi\sin(\theta)} \right]^\mathrm{T},
$$
$$
	\ba_t=\left[ e^{-j\frac{N_t-1}{2}\pi\sin(\theta)}, e^{-j\frac{N_t-3}{2}\pi\sin(\theta)},\ldots,e^{j\frac{N_t-1}{2}\pi\sin(\theta)} \right]^\mathrm{T},
$$
$\theta$ is the azimuth angle of target, and $\bw_t\sim \mathcal{CN}(\mathbf{0},\sigma_r^2\mathbf{I})$ is the AWGN at the receive antennas.

The radar reflection coefficient $\beta$ is usually assumed to remain constant within the transmit block (i.e., following Swerling II model \cite{skolnik2002introduction}), and can be estimated using preprocessing techniques \cite{deng2022receive, xi2020gridless}. 
Then, the sensing task amounts to estimate the angle $\theta$, and we resort to CRB minimization. 
To this end, we derive  the log-likelihood function for estimating $\theta$ under 1-bit quantized observations $\bq_1,\bq_2,\ldots,\bq_T$ as follows:
\begin{equation*}
	\begin{aligned}
		l_{\text{1-bit}}(\theta|\bq_1,\bq_2,\ldots,\bq_T)=&\sum_{t=1}^T \sum_{n=1}^{N_r} \left[\log\Phi\left(\frac{\sqrt{2}}{\sigma_r}\Re(q_{t,n})\Re(r_{t,n}) \right)\right. \\
		&~+ \left.\log\Phi\left(\frac{\sqrt{2}}{\sigma_r}\Im(q_{t,n})\Im(r_{t,n}) \right) \right],
	\end{aligned}
\end{equation*}
where $\mathbf{r}_t=\beta \ba_r\ba_t^\mathrm{H}\tilde{\bx}_t$, and $\Phi(z)=\frac{1}{\sqrt{2\pi}}\int_{-\infty}^{z} e^{-\frac{t^2}{2}}\mathrm{d}t $ is the cumulative density function of the standard normal distribution. 
With the log-likelihood function $l_{\text{1-bit}}$, the Fisher information for estimating $\theta$ can be derived in a manner similar to that in \cite{stoica2021cramer}, yielding
\begin{equation}
	\label{eq:1-bit FI}
	\begin{aligned}
	F_{\textnormal{1-bit}}(\bX)=&\mathbb{E}\left[\left( \frac{\partial l_{\text{1-bit}}(\theta|\bq_1,\bq_2,\ldots,\bq_T)}{\partial\theta} \right)^2\right] \\
	=&\frac{2}{\sigma_r^2} \sum_{t=1}^T \sum_{n=1}^{N_r} \Bigg[\rho\left(\frac{\sqrt{2}}{\sigma_r}\Re(r_{t,n})\right)\left(\frac{\partial \Re(r_{t,n})}{\partial \theta}\right)^2 \\
	&+ \rho\left(\frac{\sqrt{2}}{\sigma_r}\Im(r_{t,n})\right) \left(\frac{\partial \Im(r_{t,n})}{\partial \theta}\right)^2 \Bigg],
	\end{aligned}	
\end{equation}
where 
\begin{equation}
	\label{eq:rho function}
	\rho(u)=\frac{\phi^2(u)}{\Phi(u)[1-\Phi(u)]},
\end{equation}
with $\phi(z) = \frac{1}{\sqrt{2\pi}}e^{-\frac{z^2}{2}} $ being the probability density function of the standard normal distribution.

For comparison, the Fisher information for estimating $\theta$ under unquantized observations $\bz_1,\bz_2,\ldots,\bz_T$ is given by
\begin{equation*}
	\begin{aligned}
		& F_{\textnormal{unquant.}}(\bX) \\
		& = \frac{2}{\sigma_r^2}\sum_{t=1}^T\sum_{n=1}^{N_r}\left[\left( \frac{\partial \Re(r_{t,n}) }{\partial \theta} \right)^2 + \left( \frac{\partial \Im(r_{t,n}) }{\partial \theta} \right)^2\right] \\
		& = \frac{2}{\sigma_r^2}\sum_{t=1}^T \|\bA\bx_t \|^2,
	\end{aligned}
	\label{eq:unquantize FI}
\end{equation*}
where  
\begin{equation}
	\label{eq:matrix A}
	\bA=\begin{bmatrix}
	\Re(\tilde{\bA}) & -\Im(\tilde{\bA}) \\
	\Im(\tilde{\bA}) & \Re(\tilde{\bA})
\end{bmatrix}
\end{equation}
with $\tilde{\bA}=\beta(\dot{\ba}_r\ba_t^{\H}+\ba_r\dot{\ba}_t^{\H}), \dot{\ba}_r= \mathrm{d}\ba_r/\mathrm{d}\theta$ and $\dot{\ba}_t=\mathrm{d}\ba_t/\mathrm{d}\theta$. 
Hereafter, we refer to the Fisher information $F_{\textnormal{unquant.}}$ as the \emph{unquantized Fisher information}, and the Fisher information $F_{\textnormal{1-bit}}$ as the \emph{1-bit Fisher information}.

Two remarks are in order. First, by defining $\boldsymbol{\kappa}(\bx_t) \in \R^{2N_r}$ with
$$
\left[\boldsymbol{\kappa}(\bx_t)\right]_{n}= \left\{
\begin{aligned}
	&\rho\left(\frac{\sqrt{2}}{\sigma_r}\Re\left(r_{t,(n+1)/2}\right)\right),~\text{if}~n~\text{is odd}; \\
	&\rho\left(\frac{\sqrt{2}}{\sigma_r}\Im\left(r_{t,n/2}\right)\right),~\text{if}~n~\text{is even},
\end{aligned}
\right.
$$
for $n = 1,2,\dots, 2N_r$, as the scaling coefficient vector that encompasses the scaling coefficients in \eqref{eq:1-bit FI}, and defining $\mathbf{f}(\bx_t) \in \R^{2N_r}$ with
$$
	\left[\mathbf{f}(\bx_t)\right]_n=\left\{
	\begin{aligned} &\left( \frac{\partial \Re\left(r_{t,(n+1)/2}\right) }{\partial \theta} \right)^2,~\text{if}~n~\text{is odd};  \\
		&\left( \frac{\partial \Re\left(r_{t,n/2}\right) }{\partial \theta} \right)^2,~\text{if}~n~\text{is even},
	\end{aligned}\right. 
$$
for $n = 1,2,\dots, 2N_r$, as the Fisher information vector that encompasses the unquantized Fisher information, we can rewrite $F_{\textnormal{1-bit}}(\bX)$ and $F_{\textnormal{unquant.}}(\bX)$ as follows:
\begin{equation}\label{eq:Fisher_com}
	F_{\textnormal{1-bit}}(\bX)=\frac{2}{\sigma_r^2}\sum_{t=1}^T \langle \boldsymbol{\kappa}(\bx_t), \mathbf{f}(\bx_t) \rangle,
\end{equation}
\begin{equation}
	\label{eq:inner product representation}
	F_{\textnormal{unquant.}}(\bX)=\frac{2}{\sigma_r^2}\sum_{t=1}^T \langle \mathbf{1}, \mathbf{f}(\bx_t) \rangle,
\end{equation}
which means the 1-bit Fisher information can be viewed as a \emph{scaled} version of the unquantized Fisher information. 

Second, according to the expression of $\rho(\cdot)$, it is a positive-valued even function that attains its maximum value of $2/\pi$ at $0$. By comparing \eqref{eq:Fisher_com} and \eqref{eq:inner product representation}, we obtain
\begin{equation}
	F_{\text{1-bit}}(\bX) \leq \frac{2}{\pi} F_{\text{unquant.}}(\bX).
	\label{eq:1-bit FI upper bound}
\end{equation}
In other words, 1-bit quantization incurs at least a $-10\log_{10}(2/\pi) \approx 1.96$~dB loss in the Fisher information, a phenomenon that has been reported in literature \cite{host2000effects, xi2020gridless, stoica2021cramer}.

\section{Problem Formulation and Analysis}
In this section, we propose a CRB-optimized formulation for massive MIMO DFRC systems with 1-bit DACs and ADCs, and analyze the asymptotic behavior of the 1-bit Fisher information.

\label{section:problem formulation}
\subsection{Problem Formulation}
Our objective is to minimize the 1-bit CRB, while satisfying the communication QoS constraint and the 1-bit transmit signal constraint. Since the 1-bit CRB is the inverse of the 1-bit Fisher information, minimizing the 1-bit CRB is equivalent to maximizing the 1-bit Fisher information. 
As a result, the MIMO DFRC system design problem under both 1-bit DACs and ADCs is formulated as
\begin{subequations}
	\label{eq:formulation}
	\begin{align}
		\max_{\bX} & \quad F_{\text{1-bit}}(\bX) \\
		\st & \quad \bC_t \bx_t \geq \boldsymbol{\gamma}, \quad \forall~t \in [T], \label{eq:formulation comm constraint} \\
		& \quad \bx_t \in \{-\eta, \eta\}^{2N_t}, \quad \forall~t \in [T]. \label{eq:formulation binary constraint}
	\end{align}
\end{subequations}
Problem \eqref{eq:formulation} is challenging for the following reasons. First, the objective function $F_{\text{1-bit}}(\bX)$ is nonlinear and nonconvex. Second, the problem is a large-scale integer programming problem, for which the complexity of exhaustive search is $\mathcal{O}(2^{2N_t T})$; in particular, the number of antennas $N_t$ can reach the order of hundreds in massive MIMO systems. Third, the coupling effect of constraints \eqref{eq:formulation comm constraint} and \eqref{eq:formulation binary constraint} further complicates the algorithm design.

\subsection{Asymptotic Analysis and Problem Reformulation}

In this subsection, we analyze the behavior of the function $F_{\text{1-bit}}$ and delineate the possible solution scope of the design problem \eqref{eq:formulation}, aiming for a more tractable and regularized transformation of the design problem.

As shown in \eqref{eq:Fisher_com}, the 1-bit Fisher information is an inner product of two parts, the scaling coefficient part $\boldsymbol{\kappa}(\bx_t)$ and Fisher information part $\mathbf{f}(\bx_t)$. 

We first investigate the properties of $\rho(\cdot)$  in the scaling coefficient vector $\boldsymbol{\kappa}(\bx_t)$. As shown in Fig.~\ref{fig:rho_plot}, $\rho(\cdot)$ is an even, strictly positive function that decreases monotonically on $[0,\infty)$, and has a vanishing tail, i.e., $\rho(u)\rightarrow 0$ and $\rho'(u) \rightarrow 0$ as $u\rightarrow \infty$. 
In fact, it can be shown that 
$$
	\lim_{|u|\rightarrow \infty} \frac{\rho(u)}{e^{-u^2/4}}=0.
$$ 
In other words, $\rho(u)$ decays faster than $e^{-u^2/4}$ as $|u|\rightarrow \infty$. We can further show the following proposition. 
\begin{proposition}
	\label{proposition:asymptotic 1-bit FI}
	For any transmit signal matrix $\bX$ satisfying 
	$\Re(r_{t,n}) \neq 0$ and $\Im(r_{t,n}) \neq 0$ for all 
	$t \in [T]$ and $n \in [N_r]$, it holds that
	\[
	\lim_{ \textnormal{SNR}_{\textnormal{radar}} \to +\infty} F_{\mathrm{1\text{-}bit}}(\bX) = 0,
	\]
	where $\textnormal{SNR}_{\textnormal{radar}} := |\beta|^2 P / \sigma_r^2$ denotes the radar SNR.
\end{proposition}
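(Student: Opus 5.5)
We fix $\bX$ (hence the normalized unit-modulus signal $\bX/\eta$) and the normalized reflection coefficient $\beta/|\beta|$, and let $\textnormal{SNR}_{\textnormal{radar}}\to+\infty$. Since $\textnormal{SNR}_{\textnormal{radar}}=|\beta|^2P/\sigma_r^2$, this is the same as letting the ratio $\lambda:=|\beta|\eta/\sigma_r\to\infty$ with the ``shape'' fixed. By \eqref{eq:Fisher_com}, $F_{\textnormal{1-bit}}(\bX)$ is a finite sum over $t\in[T]$ and $n\in[2N_r]$ of terms of the form $\frac{2}{\sigma_r^2}\rho(u)\,(\partial_\theta v)^2$. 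Here $v$ is $\Re(r_{t,n})$ or $\Im(r_{t,n})$ and $u=\sqrt{2}\,v/\sigma_r$. It therefore suffices to show that each term tends to zero.

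The first step is to write the scaling explicitly. Set $r_{t,n}=|\beta|\eta\,\hat r_{t,n}$, where $\hat r_{t,n}$ depends only on the normalized quantities and on $\theta$. The same factorization holds for $\partial_\theta r_{t,n}$, since $\partial_\theta$ acts only on the steering vectors. Then the argument of $\rho$ is $u=\sqrt{2}\,\lambda\,\Re(\hat r_{t,n})$, and
\[
\frac{2}{\sigma_r^2}\Big(\frac{\partial \Re(r_{t,n})}{\partial\theta}\Big)^2=2\lambda^2\Big(\frac{\partial \Re(\hat r_{t,n})}{\partial\theta}\Big)^2 .
\]
Hence each term equals $2\lambda^2 c_{t,n}\,\rho(\sqrt{2}\lambda a_{t,n})$. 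The constants $c_{t,n}\ge 0$ and $a_{t,n}=\Re(\hat r_{t,n})$ (or the imaginary-part analogue) are independent of $\lambda$. By hypothesis, $a_{t,n}\neq 0$.

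The second step uses the stated decay $\rho(u)/e^{-u^2/4}\to 0$. It gives $\rho(\sqrt2\lambda a)\le e^{-\lambda^2a^2/2}$ for large $\lambda$, so
\[
\lambda^2\rho(\sqrt2\lambda a)\le \lambda^2 e^{-\lambda^2 a^2/2}\to 0 .
\]
Summing finitely many such terms yields the claim. If one prefers not to rely on the unproved decay remark, it can be verified directly from the Mills-ratio bound $1-\Phi(u)\ge \frac{u}{1+u^2}\phi(u)$ for $u>0$ together with $\Phi(u)\ge 1/2$. These give $\rho(u)\le 2\phi(u)(1+u^2)/u$, which decays like $u e^{-u^2/2}$. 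The case $u<0$ follows because $\rho$ is even.

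The main obstacle is interpretive rather than technical. One must fix precisely what is held constant as the SNR grows: the normalized waveform and the angle, with only $|\beta|\eta/\sigma_r$ increasing. Under that interpretation, the nonvanishing hypothesis guarantees that every $\rho$-argument diverges. This divergence is essential, since a zero component would instead contribute $\frac{2}{\pi}\cdot 2\lambda^2 c_{t,n}\to\infty$. The prefactor $\lambda^2$ grows only polynomially while $\rho$ decays as a Gaussian, so once this scaling is in place the limit is routine.
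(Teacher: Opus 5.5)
Your proof is correct and follows essentially the paper's route: split $F_{\textnormal{1-bit}}$ into finitely many terms, note that each is (with the waveform, $\theta$ and $\arg\beta$ held fixed) a constant times $u^2\rho(u)$ with $|u|\to\infty$ because of the nonvanishing hypothesis, and conclude from the Gaussian-type decay of $\rho$. The paper proves $\lim_{|x|\to\infty}\rho(x)x^2=0$ by L'H\^opital's rule, whereas you use the $e^{-u^2/4}$ remark or a Mills-ratio bound. The paper also leaves implicit the scaling you make explicit, namely that $\mathbf{a}_n^{\T}\bx/\mathbf{b}_n^{\T}\bx$ stays fixed as the SNR grows.
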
 
\begin{proof}
	See Appendix \ref{section:proof of proposition asymptotic}.
\end{proof}
Proposition~\ref{proposition:asymptotic 1-bit FI} states that under mild conditions, as the radar SNR increases to infinity, the 1-bit Fisher information for fixed transmit signal matrix will tend to zero. 
\begin{figure}[t]
	\centering 
	\includegraphics[width=2.8in]{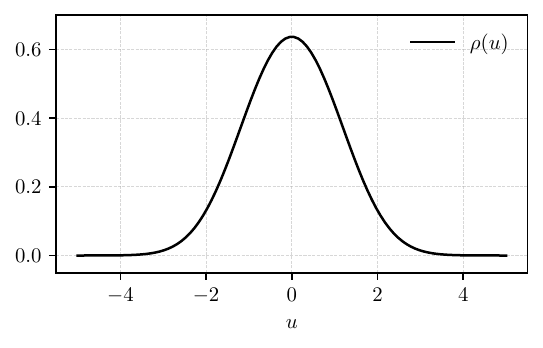}
	\caption{Plot of $\rho(u)$.}
	\label{fig:rho_plot}
\end{figure}
\begin{figure}[t]
	\centering
	\includegraphics[width=3.2in]{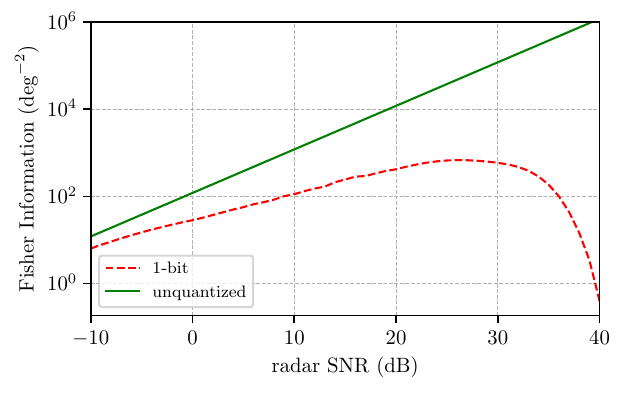}
	\caption{Asymptotic behavior of the 1-bit Fisher information.}
	\label{fig:asymptotic FI}
\end{figure}
Proposition~\ref{proposition:asymptotic 1-bit FI} is illustrated in Fig.~\ref{fig:asymptotic FI}, where we compare the 1-bit Fisher information with the unquantized Fisher information for a MIMO DFRC system equipped with $N_t=10$ transmit antennas and $N_r=10$ receive antennas. The transmit signal matrix is obtained by solving the following optimization problem via exhaustive search:
\begin{subequations}
	\label{eq:formulation unquantized}
	\begin{align}
		\max_{\bX} & \quad F_{\text{unquant.}}(\bX) \\
		\st & \quad \bC_t \bx_t \geq \boldsymbol{\gamma}, \quad \forall~t \in [T], \\
		& \quad \bx_t \in \{-\eta, \eta\}^{2N_t}, \quad \forall~t \in [T]. 
	\end{align}
\end{subequations}

According to the analysis above, when the radar SNR is high, the scaling coefficients in $\boldsymbol{\kappa}(\bx_t)$ approach zero, resulting in a low Fisher information $F_\text{1-bit}$. This is contrary to our goal of achieving high Fisher information in problem \eqref{eq:formulation}.  
Furthermore, from the perspective of nonconvex optimization, eliminating regions where the function value is clearly suboptimal can  reduce the solution scope of the problem and facilitate convergence to high-quality solutions.
As a result, we impose amplitude constraints to prevent solutions of low Fisher information.
Specifically, we restrict the amplitudes of the arguments in $\rho$ via
\begin{equation}
	\label{eq:amplitude constraint}
	A_n(\bx_t) \leq \zeta, \quad \forall~t \in [T],~ \forall~n \in [N_r],
\end{equation}
where
$$
A_n(\bx_t)=\max\left\{ \left|\frac{\sqrt{2}}{\sigma_r}\Re(r_{t,n})\right|, \left|\frac{\sqrt{2}}{\sigma_r}\Im(r_{t,n})\right| \right\}
$$
and $\zeta$ is a parameter that controls the restriction range. The choice of $\zeta$ involves a tradeoff between the scaling coefficient term $\boldsymbol{\kappa}(\bx_t)$ and the Fisher information term $\mathbf{f}(\bx_t)$. A moderate $\zeta$ imposes tighter amplitude constraints, thereby achieving a larger $\boldsymbol{\kappa}(\bx_t)$; however, the Fisher information term $\mathbf{f}(\bx_t)$ may deteriorate if the $\zeta$ is too small to  restrict the feasible region. 

Note that the amplitude constraint  \eqref{eq:amplitude constraint} is composed of $4N_rT$ linear inequality constraints. We show that these $4 N_r T$ linear inequality constraints can be approximated by $4T$ linear inequality constraints, which effectively removes the dependence on the number of receive antennas $N_r$. Consider the following proposition.
\begin{proposition}
\label{proposition:reduction of constraints}
For any $t \in [T]$, define
\begin{equation*}
\mathcal{A}(\zeta) = \Bigg\{ \bx_t \in \{-\eta, \eta\}^{2N_t} \,\Bigg|\,
A_n(\bx_t) \le \zeta, \forall~n \in [N_r] \Bigg\},
\end{equation*}
and
\begin{equation*}
\mathcal{B}(\zeta) = \Bigg\{ \bx_t \in \{-\eta, \eta\}^{2N_t} \,\Bigg|\,
A_1(\bx_t) \le \zeta
\Bigg\}.
\end{equation*}
Then, it holds that
\begin{equation*}
\mathcal{A}(\zeta) \subseteq \mathcal{B}(\zeta) \subseteq \mathcal{A}(\sqrt{2}\,\zeta).
\end{equation*}
\end{proposition}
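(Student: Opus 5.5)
The key observation is that $\mathbf{r}_t=\beta\ba_r\ba_t^{\H}\tilde{\bx}_t = c_t\,\ba_r$, where $c_t := \beta\ba_t^{\H}\tilde{\bx}_t\in\C$ is a scalar independent of $n$. Hence $r_{t,n}=c_t[\ba_r]_n$, and every entry of $\ba_r$ has unit modulus: $[\ba_r]_n=e^{j\psi_n}$ with $\psi_n=-\frac{N_r-2n+1}{2}\pi\sin(\theta)$. The plan is to exploit this rank-one structure. All receive-antenna outputs are rotations of the same complex number $c_t$, so all the quantities $A_n(\bx_t)$ are controlled by $|c_t|$, and $|c_t|$ is in turn controlled by $A_1(\bx_t)$.

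The first inclusion $\mathcal{A}(\zeta)\subseteq\mathcal{B}(\zeta)$ is immediate from the definitions. The set $\mathcal{A}(\zeta)$ imposes $A_n(\bx_t)\le\zeta$ for all $n\in[N_r]$, which includes the case $n=1$ that defines $\mathcal{B}(\zeta)$.

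For the second inclusion, take $\bx_t\in\mathcal{B}(\zeta)$. Since $|r_{t,1}|=|c_t|\,|[\ba_r]_1|=|c_t|$, we get
\[
\frac{\sqrt{2}}{\sigma_r}|c_t| = \frac{\sqrt{2}}{\sigma_r}\sqrt{\Re(r_{t,1})^2+\Im(r_{t,1})^2}\le \sqrt{2}\,A_1(\bx_t)\le\sqrt{2}\,\zeta .
\]
This uses the elementary fact that $\sqrt{a^2+b^2}\le\sqrt{2}\max\{|a|,|b|\}$. Then, for any $n\in[N_r]$,
\[
A_n(\bx_t)\le\frac{\sqrt{2}}{\sigma_r}|r_{t,n}|=\frac{\sqrt{2}}{\sigma_r}|c_t|\le\sqrt{2}\,\zeta,
\]
because the real and imaginary parts of a complex number are each bounded in modulus by the modulus itself. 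Hence $\bx_t\in\mathcal{A}(\sqrt{2}\zeta)$; the binary constraint is common to all three sets, so nothing else needs to be checked.

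There is no real obstacle in this argument. The only point that needs care is the unit-modulus identity $|[\ba_r]_n|=1$, so that $|r_{t,n}|=|c_t|$ for every $n$; this follows directly from the explicit form of the steering vector $\ba_r$. The constant $\sqrt{2}$ is exactly the ratio between the $\ell_2$ and $\ell_\infty$ norms in $\R^2$, which is why it appears in the statement.
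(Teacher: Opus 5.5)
Your proof is correct and follows the paper's argument. Because $\ba_r$ has unit-modulus entries, $|r_{t,n}| = |\beta|\,|\ba_t^{\H}\tilde{\bx}_t|$ for every $n$; you bound $\frac{\sqrt{2}}{\sigma_r}|r_{t,1}|$ by $\sqrt{2}\zeta$ via the $\ell_2$--$\ell_\infty$ comparison in $\R^2$ and then bound each $A_n(\bx_t)$ by this common modulus, which is exactly the chain of inequalities in the paper's appendix.
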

\begin{proof}
	See Appendix \ref{section:proof of proposition reduction}.
\end{proof}

Proposition \ref{proposition:reduction of constraints} shows that for each $t \in [T]$, we can approximate the $N_r$ amplitude constraint in \eqref{eq:amplitude constraint} by a single constraint
\begin{equation}
	\label{eq:amplitude constraint reduced}
	A_1(\bx_t) \leq \zeta.
\end{equation}
Thus, we incorporate constraint \eqref{eq:amplitude constraint reduced} into problem \eqref{eq:formulation} to refine the  solution space.

On the other hand, constraint \eqref{eq:amplitude constraint reduced} brings a new opportunity for a simple approximation of the objective function in problem \eqref{eq:formulation}. Under constraint \eqref{eq:amplitude constraint reduced}, the 1-bit Fisher information can be both upper and lower bounded by the unquantized Fisher information, which is shown in Proposition~\ref{proposition:well approximation}. 
\begin{proposition}
\label{proposition:well approximation}
	For any transmit signal matrix $\bX$ satisfying \eqref{eq:formulation comm constraint}, \eqref{eq:formulation binary constraint}, and \eqref{eq:amplitude constraint reduced}, it holds that
		\begin{equation}
		\label{eq:obj approximation}
			\rho(\sqrt{2}\zeta) F_{\textnormal{unquant.}}(\bX) \leq F_{\textnormal{1-bit}}(\bX) \leq \frac{2}{\pi} F_{\textnormal{unquant.}}(\bX).
		\end{equation}
\end{proposition}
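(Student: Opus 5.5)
The plan is to compare \eqref{eq:Fisher_com} and \eqref{eq:inner product representation} term by term, since both are nonnegative weighted sums of the same entries of $\mathbf{f}(\bx_t)$. Every entry of $\mathbf{f}(\bx_t)$ is a square and hence nonnegative. It therefore suffices to bound each scaling coefficient $[\boldsymbol{\kappa}(\bx_t)]_n$ from above and below by constants that do not depend on $n$ or $t$.

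The upper bound is \eqref{eq:1-bit FI upper bound}, already established in Section~\ref{section:system model}. It follows from $\rho(u)\le \rho(0)=2/\pi$ for all $u$, so $[\boldsymbol{\kappa}(\bx_t)]_n \le 2/\pi$. Summing against the nonnegative $\mathbf{f}(\bx_t)$ gives $F_{\textnormal{1-bit}}(\bX)\le \frac{2}{\pi}F_{\textnormal{unquant.}}(\bX)$; this part needs no use of the amplitude constraint.

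For the lower bound I will use Proposition~\ref{proposition:reduction of constraints}. Each $\bx_t$ is binary by \eqref{eq:formulation binary constraint} and satisfies $A_1(\bx_t)\le\zeta$ by \eqref{eq:amplitude constraint reduced}, so $\bx_t\in\mathcal{B}(\zeta)\subseteq\mathcal{A}(\sqrt{2}\zeta)$. This means $A_n(\bx_t)\le\sqrt{2}\zeta$ for all $n\in[N_r]$, and hence both $|\frac{\sqrt{2}}{\sigma_r}\Re(r_{t,n})|$ and $|\frac{\sqrt{2}}{\sigma_r}\Im(r_{t,n})|$ are at most $\sqrt{2}\zeta$. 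Since $\rho$ is even and monotonically decreasing on $[0,\infty)$, every argument $u$ of $\rho$ appearing in $\boldsymbol{\kappa}(\bx_t)$ satisfies $\rho(u)=\rho(|u|)\ge\rho(\sqrt{2}\zeta)$. Multiplying by the nonnegative entries of $\mathbf{f}(\bx_t)$, summing over $n$ and $t$, and multiplying by $2/\sigma_r^2$ gives $F_{\textnormal{1-bit}}(\bX)\ge\rho(\sqrt{2}\zeta)F_{\textnormal{unquant.}}(\bX)$. The communication constraint \eqref{eq:formulation comm constraint} plays no role beyond defining the feasible set.

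The only delicate step is the monotonicity of $\rho$ on $[0,\infty)$, together with $\rho(0)=2/\pi$ being its maximum. The paper states both properties and illustrates them in Fig.~\ref{fig:rho_plot}. If a self-contained justification were needed, I would write $\rho(u)=\phi(u)^2/(\Phi(u)\Phi(-u))$ and show that $\log\rho$ is nonincreasing on $[0,\infty)$. Its derivative is $-2u-\phi(u)/\Phi(u)+\phi(u)/\Phi(-u)$, and the inverse Mills ratio bound $\phi(u)/\Phi(-u)\le u+\phi(u)/\Phi(u)$ for $u\ge0$ would make this derivative nonpositive. For the bound itself I would cite the known log-concavity results for $\Phi$ rather than rederive them.
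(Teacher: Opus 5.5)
Your argument is correct and is essentially the paper's proof. The upper bound comes from $\rho\le\rho(0)=2/\pi$. The lower bound comes from Proposition~\ref{proposition:reduction of constraints}, which gives $A_n(\bx_t)\le\sqrt{2}\zeta$ for all $n$, combined with evenness and monotonicity of $\rho$ on $[0,\infty)$ and applied termwise against the nonnegative entries of $\mathbf{f}(\bx_t)$. Your constant $\rho(\sqrt{2}\zeta)$ is the right one; the paper's intermediate display writes $\rho(\zeta)$, which is a slip.

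The optional self-contained justification of monotonicity, however, rests on a false inequality. Write $m(u)=\phi(u)/\Phi(-u)$, and note that $\phi(u)/\Phi(u)=m(-u)$. Your claimed bound is $m(u)\le u+m(-u)$ for $u\ge 0$.
\begin{itemize}
\item It holds with equality at $u=0$. Since $m'(u)=m(u)\bigl(m(u)-u\bigr)$, we get $m'(0)=m(0)^2=2/\pi$, so $u+m(-u)-m(u)$ has derivative $1-4/\pi<0$ at $0$. The bound therefore fails for small $u>0$.
\item Numerically, at $u=1$ the left side is about $1.525$ and the right side about $1.288$.
\item It also fails for large $u$, since $m(u)\approx u+1/u$ while $m(-u)$ decays like $\phi(u)$.
\end{itemize}
Citing log-concavity of $\Phi$ does not repair this: log-concavity only says $m(-u)$ is nonincreasing, i.e.\ $m'\ge 0$. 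That is not the upper bound on $m'$ you need.

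What you actually need is the weaker inequality $m(u)-m(-u)\le 2u$, and it is true. For $Z\sim\mathcal{N}(0,1)$ we have $m(u)=\mathbb{E}[Z\mid Z>u]$ and $\mathbb{E}[Z^2\mid Z>u]=1+u\,m(u)$. Hence $1-m'(u)=1-m(u)\bigl(m(u)-u\bigr)$ is the variance of the truncated normal, which is positive. Also $m(u)-u=\mathbb{E}[Z-u\mid Z>u]>0$. Together these give $0<m'(u)<1$ for all $u$. Integrating $m'(s)+m'(-s)<2$ over $[0,u]$ yields $m(u)-m(-u)<2u$. Therefore $(\log\rho)'(u)=-2u-m(-u)+m(u)<0$ for $u>0$, which is the monotonicity you wanted.
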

\begin{proof}
	See Appendix \ref{section:proof of proposition well approximation}.
\end{proof}

Proposition~\ref{proposition:well approximation} provides a design hint: given that the unquantized Fisher information has a much simpler form (without the complex scaling coefficients $\boldsymbol{\kappa}(\bx_t)$), we propose to tackle the unquantized Fisher information as a surrogate for  the 1-bit Fisher information $F_{\text{1-bit}}$. 

To summarize, with amplitude constraint and objective change, the design problem \eqref{eq:formulation} is then transformed into 
\begin{subequations}
	\label{eq:formulationun_ac}
	\begin{align}
		\max_{\bX} & \quad F_{\text{unquant.}}(\bX) \\
		\st & \quad \eqref{eq:formulation comm constraint}, \eqref{eq:formulation binary constraint}, \eqref{eq:amplitude constraint reduced}.
	\end{align}
\end{subequations}
Since both the unquantized Fisher information $ F_{\text{unquant.}}(\bX)$ and constraints in problem \eqref{eq:formulationun_ac} are separable across $t$, we can decompose the problem \eqref{eq:formulationun_ac} into $T$ independent subproblems, each corresponding to a single time slot. We omit the time index $t$ and express the independent design problem for each time slot as
\begin{subequations}
	\label{eq:original problem}
	\begin{align}
		\min_{\bx} & \quad f(\bx) \\
		\st & \quad \bC \bx \geq \boldsymbol{\gamma}, \label{eq:original problem comm constraint} \\
		& \quad \bD \bx \leq \boldsymbol{\zeta}, \label{eq:original problem amplitude constraint}\\
		& \quad \bx \in \{-\eta, \eta\}^{2N_t}, \label{eq:original problem binary constraint}
	\end{align}
\end{subequations}
where $f(\bx)= -\|\bA\bx \|^2$, $\boldsymbol{\zeta} = \zeta \mathbf{1}$, and
\begin{equation*}
	\bD=\frac{\sqrt{2}}{\sigma_r} \begin{bmatrix}
	\Re(\beta a_{r,1} \ba_t)^{\T} & \Im(\beta a_{r,1}\ba_t)^{\T} \\
	-\Im(\beta a_{r,1}\ba_t)^{\T} & \Re(\beta a_{r,1}\ba_t)^{\T} \\
	-\Re(\beta a_{r,1}\ba_t)^{\T} & -\Im(\beta a_{r,1}\ba_t)^{\T} \\
	\Im(\beta a_{r,1}\ba_t)^{\T} & -\Re(\beta a_{r,1}\ba_t)^{\T}
	\end{bmatrix} \in \R^{4\times 2N_t}.
\end{equation*}

\section{Algorithm Design}
\label{section:algorithm design}

In this section, we develop an efficient algorithm for solving problem \eqref{eq:original problem}. First, we employ a penalty technique to tackle the binary constraint \eqref{eq:original problem binary constraint} and analyze the relationship between the penalized problem and the original problem in terms of global and local minimizers in Section \ref{subsection:penalty method}. Then, we customize the ALM for solving the penalized formulation, with each ALM subproblem solved by a nonmonotone SPG method in Section \ref{subsection:ALM}. At last, in Section \ref{subsection:refinement}, we discuss techniques for refining the final solutions.
\subsection{Penalty Method}
\label{subsection:penalty method}
The binary constraint \eqref{eq:original problem binary constraint} and its coupling with the linear constraints \eqref{eq:original problem comm constraint}--\eqref{eq:original problem amplitude constraint} poses significant challenges for optimization. To address this issue, we adopt a penalty-based approach, see \cite{shao2019framework, liu2024extreme} and the overview paper \cite{liu2024survey}. By introducing a negative squared $\ell_2$-norm penalty term $-\lambda \|\bx\|^2$, where $\lambda > 0$ is the penalty parameter, into the objective function and relaxing the binary constraint to a box constraint, the original discrete optimization problem is transformed into a continuous one. The penalty term drives the solution toward the boundary of $[-\eta, \eta]^{2N_t}$, thereby leading to binary solution in $\{-\eta, \eta\}^{2N_t}$. Specifically, the penalized problem is 
\begin{subequations}
	\label{eq:penalty problem}
	\begin{align}
		\min_{\bx} & \quad f(\bx)-\lambda\|\bx\|^2 \\
		\st & \quad \bC \bx \geq \boldsymbol{\gamma}, \label{eq:penalty problem comm constraint} \\
		& \quad \bD \bx \leq \boldsymbol{\zeta}, \label{eq:penalty problem amplitude constraint} \\
		& \quad \bx \in [-\eta, \eta]^{2N_t}. \label{eq:penalty problem box constraint}
	\end{align}
\end{subequations}
Problems \eqref{eq:original problem} and \eqref{eq:penalty problem} share the same globally optimal solutions, which is stated in the following Proposition. 
\begin{proposition}[\hspace{-0.01mm}{\cite[Proposition~1]{wu2025quantized}}]
	\label{proposition:global}
	There exists $\lambda_0>0$ such that for all $\lambda > \lambda_0$, problem \eqref{eq:original problem} and \eqref{eq:penalty problem} share the same global minimizers.
\end{proposition}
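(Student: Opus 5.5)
The plan is to compare the penalized objective $g_\lambda(\bx)=f(\bx)-\lambda\|\bx\|^2$ over the polytope
$$\mathcal{F}=\{\bx\in[-\eta,\eta]^{2N_t}: \bC\bx\ge\boldsymbol{\gamma},\ \bD\bx\le\boldsymbol{\zeta}\}$$
with $f$ over its binary points $\mathcal{F}_b=\mathcal{F}\cap\{-\eta,\eta\}^{2N_t}$, which is exactly the feasible set of problem \eqref{eq:original problem}. Two facts drive everything. First, on $\mathcal{F}_b$ we have $\|\bx\|^2=2N_t\eta^2$, so $g_\lambda=f-2N_t\eta^2\lambda$ there, and the minimizers of $g_\lambda$ over $\mathcal{F}_b$ coincide with those of \eqref{eq:original problem}. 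Second, $\|\bx\|^2\le 2N_t\eta^2$ on the box, with equality only at binary points. It therefore suffices to show that for large $\lambda$ every global minimizer of $g_\lambda$ over $\mathcal{F}$ lies in $\mathcal{F}_b$. We assume $\mathcal{F}_b\neq\emptyset$, since otherwise the statement is vacuous or needs separate handling.

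First I would note that $g_\lambda$ is concave, because $f=-\|\bA\bx\|^2$ is concave and so is $-\lambda\|\bx\|^2$, and for $\lambda>0$ it is in fact strictly concave. A concave function on a polytope attains its minimum at a vertex, and strict concavity forces every global minimizer to be a vertex: any non-vertex point is an interior point of a segment in $\mathcal{F}$, and strict concavity gives a strictly smaller value at one endpoint. The vertex set $V$ of $\mathcal{F}$ is finite and does not depend on $\lambda$.

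Split $V$ into binary vertices $V_b=V\cap\{-\eta,\eta\}^{2N_t}$ and non-binary vertices $V_n$. Every point of $\mathcal{F}_b$ is an extreme point of the box lying in $\mathcal{F}$, hence a vertex of $\mathcal{F}$, so $V_b=\mathcal{F}_b$. For each $\bv\in V_n$ we have $\delta(\bv):=2N_t\eta^2-\|\bv\|^2>0$, and finiteness of $V_n$ gives $\delta_{\min}=\min_{V_n}\delta>0$. Take any $\bx^\star\in\mathcal{F}_b$. Then for $\bv\in V_n$,
$$g_\lambda(\bv)-g_\lambda(\bx^\star)=f(\bv)-f(\bx^\star)+\lambda\,\delta(\bv)\ \ge\ -\Delta+\lambda\delta_{\min},$$
where $\Delta=\max_{\mathcal{F}}f-\min_{\mathcal{F}}f<\infty$. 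Setting $\lambda_0=\Delta/\delta_{\min}$ makes this strictly positive for all $\lambda>\lambda_0$. Hence no non-binary vertex is a global minimizer, every global minimizer of \eqref{eq:penalty problem} lies in $\mathcal{F}_b$, and by the first fact the two problems share the same global minimizers. If $V_n=\emptyset$, any $\lambda_0>0$ works.

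The main obstacle is the vertex reduction, specifically the requirement that every global minimizer, not merely some, is a vertex. Plain concavity does not give this, so I rely on strict concavity from $\lambda>0$. The other subtle point is that $\delta_{\min}>0$ depends on finiteness of the vertex set; this uniform gap is what makes the threshold $\lambda_0$ independent of the point. As a fallback if the vertex argument ran into trouble, one could use a direct perturbation argument: move a fractional coordinate toward $\pm\eta$ along a feasible direction. That approach is messier because of the linear constraints, so I would use the vertex route.
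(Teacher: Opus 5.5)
Your proof is correct. The paper gives no argument of its own here. It imports the result from \cite{wu2025quantized}, where the objective is a beampattern MSE that is not concave in $\bx$, so a proof covering that setting cannot rest on concavity of $f$ for every $\lambda>0$.

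For an arbitrary Lipschitz $f$, the standard route is the classical exact-penalty split:
\begin{itemize}
\item Near binary points: for $\bx\in\mathcal{F}$ with $\|\bx-\bz\|_\infty<\eta$ around some $\bz\in\mathcal{F}_b$, one has $2N_t\eta^2-\|\bx\|^2\ge\eta\|\bx-\bz\|_1$. This beats $|f(\bx)-f(\bz)|\le L\|\bx-\bz\|_1$ once $\lambda>L/\eta$.
\item Away from them: on the compact set obtained by removing these open balls from $\mathcal{F}$, the gap $2N_t\eta^2-\|\bx\|^2$ is bounded below by a positive constant. This holds because every binary point of $\mathcal{F}$ lies in $\mathcal{F}_b$ and has therefore been removed.
\end{itemize}

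You instead use the fact that here $f=-\|\bA\bx\|^2$ is concave. Then $g_\lambda$ is strictly concave for every $\lambda>0$, all global minimizers are vertices, and the comparison collapses to finitely many vertices, with $\mathcal{F}_b$ being exactly the binary ones. This buys you three things:
\begin{itemize}
\item a shorter argument with no neighborhood bookkeeping;
\item an explicit threshold $\Delta/\delta_{\min}$;
\item consistency with the extreme-point mechanism the paper itself uses for Theorem~\ref{theorem: Ix lower bound}.
\end{itemize}
The price is dependence on concavity and on a finite vertex set. For a nonconcave $f$ with $L_f$-Lipschitz gradient, you would first need $\lambda>L_f/2$, i.e.\ $\lambda_0=\max\{L_f/2,\Delta/\delta_{\min}\}$. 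The Lipschitz argument needs neither smoothness of $f$ nor polyhedrality of the feasible set.

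Two small points. First, nonemptiness of $\mathcal{F}_b$ is a genuine hypothesis, not a convenience. If $\mathcal{F}_b=\emptyset\neq\mathcal{F}$, the penalized problem has minimizers while \eqref{eq:original problem} has none, so the claim fails. Second, if $\Delta=0$, take any positive $\lambda_0$ rather than $\Delta/\delta_{\min}=0$.
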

Next, we analyze the property of the local minimizers. When there are only binary constraints, it has been shown that any local minimizer of the corresponding penalized problem must lie in the set $\{-\eta, \eta\}^{2N_t}$~\cite{shao2019framework, liu2024extreme}. In contrast, our problem additionally includes the linear inequality constraints \eqref{eq:penalty problem comm constraint} and \eqref{eq:penalty problem amplitude constraint}, which interact with the box constraint in \eqref{eq:penalty problem box constraint} and may introduce local minimizers outside $\{-\eta, \eta\}^{2N_t}$. To further investigate this issue, we show that any local minimizer of problem~\eqref{eq:penalty problem} must still have a large proportion of its elements in $\{-\eta, \eta\}$, as stated in Theorem \ref{theorem: Ix lower bound}.

\begin{theorem}
	\label{theorem: Ix lower bound}
	Assume that $\lambda>L_f/2$, where $L_f$ is the Lipschitz constant of $\nabla f(\cdot)$ on $[-\eta, \eta]^{2N_t}$. 
Then, any local minimizer $\hat{\bx}$ to problem \eqref{eq:penalty problem} must satisfy
	\begin{equation}
		\label{eq:Ix lower bound}
		I(\hat{\bx}) \leq 2K+2,
	\end{equation}
where $I(\bx)$ is the cardinality of non-binary elements in $\bx$, defined as 
	\[
		I(\bx)=2N_t - \sum_{n=1}^{2N_t} \mathbf{1}_{\{-\eta, \eta\}}(x_n).
	\]
\end{theorem}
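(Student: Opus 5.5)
The plan is to argue by contradiction using a feasible direction that lives in the null space of the active linear constraints and is supported only on the non-binary coordinates. Let $\hat{\bx}$ be a local minimizer of problem \eqref{eq:penalty problem}, and let $\mathcal{J}=\{n : |\hat{x}_n|<\eta\}$ be the set of non-binary coordinates, so $|\mathcal{J}|=I(\hat{\bx})$. Problem \eqref{eq:penalty problem} has $2K$ constraints in \eqref{eq:penalty problem comm constraint}, two rows per user coming from $\bC_k\in\R^{2\times 2N_t}$. It has $4$ constraints in \eqref{eq:penalty problem amplitude constraint}. Rows three and four of $\bD$ are the negatives of rows one and two, so at most two amplitude constraints can be active simultaneously, unless both $\pm$ versions hold with equality, which forces $\zeta=0$. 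Even in that degenerate case the active rows span at most a two-dimensional space. Hence the active linear constraints at $\hat{\bx}$ have a row space of dimension at most $2K+2$.

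Now suppose $I(\hat{\bx})\ge 2K+3$. Consider the restriction to $\mathcal{J}$ of the matrix formed by the active rows: it has at most $2K+2$ independent rows but at least $2K+3$ columns. So there is a nonzero $\mathbf{d}\in\R^{2N_t}$, supported on $\mathcal{J}$, with $\mathbf{a}^{\T}\mathbf{d}=0$ for every active row $\mathbf{a}$. For small $|s|$, both $\hat{\bx}+s\mathbf{d}$ and $\hat{\bx}-s\mathbf{d}$ are feasible, for three reasons:
\begin{itemize}
\item the active constraints remain at equality;
\item the inactive constraints have slack;
\item the coordinates in $\mathcal{J}$ lie strictly inside $(-\eta,\eta)$.
\end{itemize}
So $\mathbf{d}$ is a two-sided feasible direction.

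Next, evaluate the objective $g(\bx)=f(\bx)-\lambda\|\bx\|^2$ along this line. Since $f$ is quadratic, $\varphi(s)=g(\hat{\bx}+s\mathbf{d})$ is a quadratic in $s$ with
\[
\varphi''(0)=\mathbf{d}^{\T}\nabla^2 f\,\mathbf{d}-2\lambda\|\mathbf{d}\|^2\le (L_f-2\lambda)\|\mathbf{d}\|^2<0,
\]
using $\|\nabla^2 f\|\le L_f$ and $\lambda>L_f/2$. Local minimality along a two-sided direction requires $\varphi'(0)=0$, and then $\varphi(s)<\varphi(0)$ for all small $s\ne0$. This contradicts the local optimality of $\hat{\bx}$, so $I(\hat{\bx})\le 2K+2$. (Here $\nabla^2 f=-2\bA^{\T}\bA$ is negative semidefinite, so even $\lambda>0$ suffices; the stated assumption covers it regardless.)

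The main point needing care is the counting of active amplitude constraints, namely showing that the four rows of $\bD$ contribute rank at most $2$. This follows from the structure $\bD=[\mathbf{D}_0;-\mathbf{D}_0]$ with $\mathbf{D}_0\in\R^{2\times 2N_t}$: the four rows span at most a two-dimensional space, so the combined active rank is bounded by $2K+2$ without any need to argue about which constraints are actually active.
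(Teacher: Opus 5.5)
Your proof is correct and is essentially the paper's argument. The paper shows that strict concavity forces a local minimizer to be an extreme point of the polytope, whose active constraints must then have rank $2N_t$. Since the rows of $\bC$ and $\bD$ contribute rank at most $2K+2$, at least $2N_t-2K-2$ box constraints must be tight. You inline the extreme-point step by directly building the two-sided null-space direction supported on the non-binary coordinates; the rank count and the concavity contradiction are the same. Your side remark that $\lambda>0$ already suffices, because $\nabla^2 f=-2\bA^{\T}\bA\preceq 0$, is also correct.
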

\begin{proof}
	See Appendix \ref{section:proof of theorem Ix lower bound}.
\end{proof}

Theorem~\ref{theorem: Ix lower bound} establishes that any local minimizer contains at most $2K+2$ elements that do not belong to $\{-\eta, \eta\}$. In massive MIMO systems, the number of users $K$ is typically much smaller than the number of transmit antennas $N_t$. This result provides an important algorithmic insight for obtaining high-quality solutions, which will be exploited in  Section~\ref{subsection:refinement}.

\subsection{ALM for Solving \eqref{eq:penalty problem}}
\label{subsection:ALM}
In this subsection, we employ the ALM \cite{andreani2007on} to solve \eqref{eq:penalty problem}. 
We define the partial augmented Lagrangian function with respect to the constraints \eqref{eq:penalty problem comm constraint} and \eqref{eq:penalty problem amplitude constraint} as follows:
\begin{equation*}
	\begin{aligned}
		& L_{\rho_u, \rho_v}(\bx,\bu,\bv) \\
		&=f(\bx)-\lambda\|\bx\|^2+\frac{\rho_u}{2}\left\|\max\left\{-\bC\bx+\boldsymbol{\gamma}+\frac{\bu}{\rho_u}, \mathbf{0}\right\}\right\|^2 \\
		&~~~+\frac{\rho_v}{2}\left\|\max\left\{\bD\bx-\boldsymbol{\zeta}+\frac{\bv}{\rho_v}, \mathbf{0}\right\}\right\|^2,
	\end{aligned}
\end{equation*}
where $\bu$ and $\bv$ are the Lagrange multipliers associated with constraints~\eqref{eq:penalty problem comm constraint} and~\eqref{eq:penalty problem amplitude constraint}, respectively, and $\rho_u$ and $\rho_v$ are the corresponding ALM penalty parameters, respectively.

The ALM consists of a sequence of minimizations of $L(\bx,\bu,\bv,\rho_u,\rho_v)$ subject to constraint~\eqref{eq:penalty problem box constraint}, followed by updates of multipliers $\bu$, $\bv$ and penalty parameters $\rho_u$, $\rho_v$. 
To be more specific, at $\ell$-th iteration, the ALM algorithm performs three steps until convergence:
\begin{figure*}[t]
\begin{equation}
	\label{eq:rho_u rho_v update}
	\begin{aligned}
		\rho_u^{(\ell+1)}&=\left\{\begin{aligned}
			\iota\rho_u^{(\ell+1)},&~\text{if}~\sigma_1(\bx^{(\ell+1)}, \bu^{(\ell+1)}, \rho_u^{(\ell+1)}) > \tau\sigma_1(\bx^{(\ell)}, \bu^{(\ell)}, \rho_u^{(\ell)}); \\
			\rho_u^{(\ell+1)},&~\text{otherwise}.
		\end{aligned}\right. \\
		\rho_v^{(\ell+1)}&=\left\{\begin{aligned}
			\iota\rho_v^{(\ell+1)},&~\text{if}~\sigma_2(\bx^{(\ell+1)}, \bv^{(\ell+1)}, \rho_v^{(\ell+1)}) > \tau\sigma_2(\bx^{(\ell)}, \bv^{(\ell)}, \rho_v^{(\ell)}); \\
			\rho_v^{(\ell+1)},&~\text{otherwise}.
		\end{aligned}\right.
	\end{aligned}
\end{equation}
 \hrule
 \vspace{-0.3cm}
\end{figure*}

\textit{Step 1 (Solve ALM subproblem)} It requires to solve the ALM subproblem
\begin{equation}
	\label{eq:ALM subproblem}
	\min_{\bx \in [-\eta, \eta]^{2N_t}} f^{(\ell)}(\bx):= L_{\rho_u^{(\ell)}, \rho_v^{(\ell)}}(\bx,\bu^{(\ell)},\bv^{(\ell)}),
\end{equation}
which is a nonconvex optimization problem with a smooth objective function and a convex box constraint. 

One can apply off-the-shelf optimization methods, such as the projected gradient method \cite{calamai1987projected}, active set method \cite{hager2006new}, or trust region method \cite{friedlander1994new} to address this problem.
Here, we adopt a nonmonotone SPG method to solve \eqref{eq:ALM subproblem}. 
This method employs an advanced step-size strategy inspired by Newton's method, known as the Barzilai--Borwein (BB) steplength \cite{barzilai1988two}, and typically exhibits superior solution quality and convergence speed compared to the conventional projected gradient method.

The nonmonotone SPG method works as follows. 
Let $\left\{x^{[m]}\right\}$ denote the sequence generated by the SPG method, with $[m]$ denoting the SPG iteration. 
At the $m$-th iteration, $\bx^{[m]}$ is updated via
\begin{equation}
	\label{eq:SPG update}
	\bx^{[m+1]}=\mathcal{P}_{[-\eta, \eta]^{2N_t}}\left(\bx^{[m]}-\delta \alpha^{[m]}\nabla f^{(\ell)}(\bx^{[m]})\right),
\end{equation}
where $\alpha^{[m]}$ is the safeguarded BB steplength, defined as~\cite{andreani2007on}
\begin{equation}
	\label{eq:BB steplength}
	\alpha^{[m]}=\left\{
	\begin{aligned}
		&\min\left\{\max\left\{\frac{s^{\T}s}{s^{\T}y}, \alpha_{\min}\right\}, \alpha_{\max}\right\},~\text{if}~s^{\T}y>0,\\
		&\alpha_{\max},~\text{otherwise}.
	\end{aligned}
	\right.
\end{equation}
with $s=\bx^{[m]}-\bx^{[m-1]}, y=\nabla f^{(\ell)}(\bx^{[m]}) - \nabla f^{(\ell)}(\bx^{[m-1]})$, and $\delta$ is initialized to 1 and successively reduced by a backtracking procedure until Grippo--Lampariello--Lucidi (GLL) line search criterion \cite{grippo1986nonmonotone}
\begin{equation}
	\label{eq:GLL}
	\begin{aligned}
	 	f^{(\ell)}(\bx^{[m+1]})  \leq&\max_{0\leq q\leq \min\{m, Q-1\}} f^{(\ell)}(\bx^{[m-q]})\\
		& +\delta_1\left(\bx^{[m+1]}-\bx^{[m]}\right)^{\T}  \nabla f^{(\ell)}(\bx^{[m]})
	\end{aligned}
\end{equation}
is satisfied, where $Q \in \mathbb{N}$, $0<\delta_1<1$. The nonmonotone SPG method terminates if an  $\epsilon$-stationary point of \eqref{eq:ALM subproblem} is obtained, which is characterized as
\begin{equation}
	\label{eq:SPG termination}
	\left\|\mathcal{P}_{[-\eta, \eta]^{2N_t}}\left(\bx^{[m]}-\nabla f^{(\ell)}(\bx^{[m]})\right) - \bx^{[m]}\right\|\leq \epsilon_2,
\end{equation}
where $\epsilon_2>0$ is a prescribed threshold. The nonmonotone SPG method is summarized in Algorithm \ref{alg:SPG}.

\begin{algorithm}[t]
\caption{Nonmonotone SPG Method for Solving \eqref{eq:ALM subproblem}}
\label{alg:SPG}
\begin{algorithmic}[1]
\State \textbf{Input}: $\alpha_{\max}>\alpha_{\min}>0, 0< \delta_1, \delta_2 < 1$, $Q \in \mathbb{N}$,   $\epsilon_2>0$;
\State \textbf{Initialization}: $\bx^{[0]}\in [-\eta, \eta]^{2N_t}, \alpha^{[0]}\in~[\alpha_{\min}, \alpha_{\max}]$, $m\gets 0$;
\Repeat
\State Initialize $\alpha^{[m]}$ according to \eqref{eq:BB steplength}, $\delta \gets 1$;
\Repeat
\State $\bx^{[m+1]}\gets \mathcal{P}_{[-\eta, \eta]^{2N_t}}\left(\bx^{[m]}-\delta \alpha^{[m]}\nabla f^{(\ell)}(\bx^{[m]})\right)$;
\State $\delta \gets \delta/2$;
\Until{GLL line search criterion \eqref{eq:GLL} is satisfied};
\State $m \gets m+1$;
\Until a stationary point of \eqref{eq:ALM subproblem} is obtained.
\end{algorithmic}
\end{algorithm}

\textit{Step 2 (Update multipliers)} We update
\begin{equation}
	\label{eq:uv update}
	\hspace{-0.6em}
	\begin{aligned}
		\bu^{(\ell+1)}&=\min\{\max\{ \bu^{(\ell)}-\rho_u^{(\ell)}(\bC\bx^{(\ell)}-\boldsymbol{\gamma}), \mathbf{0}\}, \bu_{\max}\}, \\
		\bv^{(\ell+1)}&=\min\{\max\{ \bv^{(\ell)}+\rho_v^{(\ell)}(\bD\bx^{(\ell)}-\boldsymbol{\zeta}), \mathbf{0}\}, \bv_{\max}\},
	\end{aligned}
\end{equation}
where $\bu_{\max}, \bv_{\max}$ serve as safeguards to guarantee algorithm convergence \cite{galvan2020alternating}.

\textit{Step 3 (Update penalty parameters)}
The ALM penalty parameters are updated as in \eqref{eq:rho_u rho_v update}, as shown at the top of this page, where
\begin{equation*}
	\begin{aligned}
		\sigma_1(\bx, \bu, \rho_u) &=\left\|\max\left\{ -\bC\bx+\boldsymbol{\gamma},-\frac{\bu}{\rho_u}\right\}\right\|_\infty \\
		\sigma_2(\bx, \bv, \rho_v) &=\left\|\max\left\{ \bD\bx-\boldsymbol{\zeta},-\frac{\bv}{\rho_v}\right\}\right\|_\infty
	\end{aligned}
\end{equation*}
characterize the level of infeasibility and violation of complementarity with respect to constraint \eqref{eq:penalty problem comm constraint} and \eqref{eq:penalty problem amplitude constraint}, respectively. 
The overall algorithm terminates when  
\[
\sigma(\bx^{(\ell+1)}, \bu^{(\ell+1)}, \bv^{(\ell+1)}, \rho_u^{(\ell+1)}, \rho_v^{(\ell+1)}) \leq \epsilon_1,
\] where $\sigma(\bx, \bu, \bv, \rho_u, \rho_v) = \max\{\sigma_1(\bx, \bu, \rho_u), \sigma_2(\bx, \bv, \rho_v)\}
$ and $\epsilon_1>0$ is a prescribed threshold.

The detailed implementation of ALM is listed in Algorithm~\ref{alg:ALM}. The convergence of Algorithm~\ref{alg:ALM} to a stationary point of~\eqref{eq:penalty problem} is guaranteed under the conditions in~\cite[Section~4]{andreani2007on}.

\begin{algorithm}[t]
\caption{ALM for Solving \eqref{eq:penalty problem}}
\label{alg:ALM}
\begin{algorithmic}[1]
\State \textbf{Input}: $\bu_{\max}, \bv_{\max} \geq \mathbf{0}, \tau \in (0,1), \iota>1$,  $\epsilon_1>0$;
\State \textbf{Initialization}: $\bx^{(0)}\in [-\eta, \eta]^{2N_t}, \bu^{(0)}\in [\mathbf{0}, \bu_{\max}], \bv^{(0)} \in [\mathbf{0}, \bv_{\max}], \rho_u^{(0)}, \rho_v^{(0)}>0, \ell\gets 0$;
\While{$\sigma(\bx^{(\ell)}, \bu^{(\ell)}, \bv^{(\ell)}, \rho_u^{(\ell)}, \rho_v^{(\ell)})\geq \epsilon_1$}
\State Obtain $\bx^{(\ell+1)}$ by solving  \eqref{eq:ALM subproblem} via Algorithm \ref{alg:SPG};
\State Update Lagrange multipliers $\bu, \bv$ using \eqref{eq:uv update};
\State Update the ALM penalty parameters $\rho_u, \rho_v$ using \eqref{eq:rho_u rho_v update};
\State $\ell \gets \ell+1$;
\EndWhile
\end{algorithmic}
\end{algorithm}

\subsection{Solution Refinement}
\label{subsection:refinement}
Once problem~\eqref{eq:penalty problem} is solved, two cases may arise: either the elements of the obtained solution are already binary, or some of them are not. 
For the former case, the algorithm terminates since the solution is already feasible for problem~\eqref{eq:original problem}. 
For the latter case, we need to refine the solution to get a feasible binary vector. To this end, we employ two strategies, namely, a local search procedure and a cutting-plane strategy.

\paragraph{Local Search Procedure}
As established in Theorem~\ref{theorem: Ix lower bound}, when the ALM algorithm converges to a local minimizer $\hat{\bx}$ of problem \eqref{eq:penalty problem}, a large proportion of the solution are already binary. Leveraging this property, we perform a local search around the obtained solution to identify a nearby feasible binary vector over the neighborhood set
\begin{equation*}
	\mathcal{N}(\hat{\bx}, c)
	:= \left\{ \bx \in \{\eta, -\eta\}^{2N_t} \,\middle|\, \|\bx - \bar{\bx}\|_0 \leq c \right\},
\end{equation*}
where $\bar{\bx} = \eta \cdot \sign(\hat{\bx})$, and $c \in \mathbb{N}$ controls the size of the neighborhood. The cardinality of $\mathcal{N}(\hat{\bx}, c)$ is given by $\lvert \mathcal{N}(\hat{\bx}, c) \rvert = \binom{2N_t}{c}$. We search over $\mathcal{N}(\hat{\bx}, c)$ to identify vectors satisfying constraint~\eqref{eq:original problem comm constraint}, and select the one that maximizes
\begin{equation*}
	f_{\textnormal{1-bit}}(\bx)=\frac{2}{\sigma_r^2} \langle \boldsymbol{\kappa}(\bx), \mathbf{f}(\bx) \rangle.
\end{equation*}
\paragraph{Cutting-Plane Strategy}
If there exists no feasible point within $\mathcal{N}(\hat{\bx}, c)$ that satisfies~\eqref{eq:original problem comm constraint}, we further introduce a cutting-plane strategy to eliminate the current local minimizer and its surrounding region from the feasible set of problem~\eqref{eq:penalty problem} because it contains no feasible solution. 
In fact, the region $\mathcal{N}(\hat{\bx}, c)$ contains all vectors $\bx \in  \{-\eta, \eta\}^{2N_t}$ that differs from $\bar{\bx}$ in at most $c$ positions. Hence, the inner product $\bar{\bx}^{\T} \bx$ is lower bounded by $(2N_t-2c)\eta^2$, and vice versa. Hence, for all~$\bx \in \{\eta, -\eta\}^{2N_t}$, we have
\begin{equation*}
	\bx \in \mathcal{N}(\hat{\bx}, c)
	\Longleftrightarrow
	\bar{\bx}^{\T} \bx \geq (2N_t-2c)\eta^2,
\end{equation*}
and we can deduce that 
\begin{equation*}
	\bx \notin \mathcal{N}(\hat{\bx}, c)
	\Longleftrightarrow
	\bar{\bx}^{\T} \bx \leq (2N_t-2c-2)\eta^2.
\end{equation*}
Therefore, we add the following linear inequality constraint to eliminate $\mathcal{N}(\hat{\bx}, c)$ and arrive at a refined feasible region constraint for problem~\eqref{eq:original problem}:
\begin{equation}
	\label{eq:cut}
	\bar{\bx}^{\T} \bx \leq (2N_t-2c-2)\eta^2.
\end{equation}
We incorporate \eqref{eq:cut} into problem~\eqref{eq:original problem} and perform Algorithm~\ref{alg:ALM} again, initialized with $\hat{\bx}$ as the starting point.

\begin{algorithm}[t]
\caption{Algorithm for Solving \eqref{eq:original problem}}
\label{alg:overall}
\begin{algorithmic}[1]
\State \textbf{Input:} $c\in \mathbb{N}$;
\Repeat
    \State Solve problem \eqref{eq:penalty problem} via Algorithm~\ref{alg:ALM} to obtain $\hat{\bx}$;
    
    \If{$\hat{\bx} \in \{-\eta,\eta\}^{2N_t}$}
        \State \Return $\hat{\bx}$;
    \EndIf
    
    \State $\mathcal{F} \gets \{ \bx \in \mathcal{N}(\bar{\bx}, c) \mid \bC\bx \geq \boldsymbol{\gamma} \}$;
    
    \If{$\mathcal{F} \neq \varnothing$}
        \State \Return $\argmin_{\bx \in \mathcal{F}} f_{\text{1-bit}}(\bx)$;
    \EndIf
    
    \State Update the current best solution $\bx^\star$;
    \State Add constraint \eqref{eq:cut} to problem \eqref{eq:penalty problem}, return to step 3;

\Until{a stopping criterion is satisfied}.
\end{algorithmic}
\end{algorithm}

To summarize, the overall algorithm including ALM  and the refinement step is shown in Algorithm~\ref{alg:overall}, where in line 12 the current best solution $\bx^*$ is determined by finding
\begin{equation*}
	\mathop{\arg\max}_{\bx \in \mathcal{N}(\hat{\bx}, c)}~ f_{\text{1-bit}}(\bx)+\xi\min_{k}\{\bc_k^{\T}\bx\},
\end{equation*}
where $\bc_k^{\T}$ is the $k$-th row of $\bC$, $\xi$ acts as a balancing parameter between radar and communication performance.

\subsection{Complexity Analysis}

We analyze the big-O computational complexity of the proposed algorithm in terms of the number of transmit antennas $N_t$, the number of users $K$, the total number of SPG iterations $N_{\text{iter}}$ and the number of cutting planes $N_{\text{cut}}$. The computational cost of evaluating the gradient $\nabla L(\bx,\bu,\bv,\rho_u,\rho_v)$ is in the order of $\mathcal{O}(N_t^2 + K N_t)$. Updating the dual variables and penalty parameters requires $\mathcal{O}(K)$ operations per iteration. Consequently, the per-iteration complexity of solving the penalized problem is $\mathcal{O}(N_t^2 + K N_t)$. For the local search procedure, the cardinality of the search set $\mathcal{N}(\hat{\bx}, c)$ scales as $\mathcal{O}(N_t^2)$ in our implementation with $c=2$. For each vector $\bx \in \{-\eta, \eta\}^{2N_t}$, verifying whether $\bx$ satisfies constraint~\eqref{eq:original problem comm constraint} incurs a computational cost of $\mathcal{O}(K)$. Therefore, the overall computational complexity of the local search step is $\mathcal{O}(K N_t^2)$.

By summing up the above components, the overall computational complexity of the proposed algorithm is given by $\mathcal{O}\left(N_{\text{iter}} (N_t^2 + K N_t) + N_{\text{cut}}K N_t^2 \right).$

\section{Numerical Results} 
\label{section:numerical results}
In this section, we present simulation results to validate the theoretical analysis and evaluate the performance of the proposed design. Specifically, we first verify Theorem~\ref{theorem: Ix lower bound} and examine the convergence behavior of Algorithm~\ref{alg:ALM}. 
We then compare the proposed design with state-of-the-art (SOTA) MIMO DFRC system designs. 
Finally, we evaluate the performance of the proposed design under different system settings.

The simulation settings are as follows. Unless stated otherwise, the number of transmit antennas and receive antennas are set to $N_t=32$ and $N_r=32$, respectively, and the number of users is $K=6$.  The transmit power is fixed at $P=20$~dBm. The communication SNR, defined as $\textnormal{SNR}_{\text{comm}} = P/\sigma_c^2$, is set to $20$ dB, while the radar SNR is fixed at $10$ dB. We adopt 8-PSK modulation and impose a SEP requirement of $10^{-4}$. The communication channel $\mathbf{H}$ is assumed to undergo Rayleigh fading, with independent and identically distributed entries drawn from $\mathcal{CN}(0,1)$. The azimuth angle of target $\theta$ is drawn from a uniform distribution on $[-\pi/3, \pi/3]$. All results are averaged over 100 independent channel realizations.
 
 The parameters used in Algorithm~1 and Algorithm~2 are set as follows: $\mathbf{u}_{\max}=\mathbf{v}_{\max}=10^{4}\cdot\mathbf{1}, \tau=0.8, \iota=3, \alpha_{\max}=10^{30}, \alpha_{\min}=10^{-30}, \delta_1 =10^{-4}, \epsilon_1=\epsilon_2=10^{-4}$, $Q=5$, $c=2$. The initialization is set as follows: $\bx^{(0)}=\mathbf{0}$, $\mathbf{u}^{(0)}=\mathbf{0}$, $\mathbf{v}^{(0)}=\mathbf{0}$, $\rho_u^{(0)}=0.1$, $\rho_v^{(0)}=0.01$. Algorithm~\ref{alg:overall} is terminated if the number of cutting planes $N_{\text{cut}}$ exceeds 3.

Define the maximum likelihood (ML) estimator of $\theta$ as $
	\hat{\theta}^{\text{ML}}= \mathop{\arg\max}_{\theta}~ l_{\text{1-bit}}(\theta|\bq_1,\bq_2,\ldots,\bq_T),$
where the maximizer is found via an exhaustive grid search over $[-\pi/2,\pi/2]$. We adopt root-CRB (RCRB) and root-MSE (RMSE) of the ML estimator as the sensing performance metrics, given by
 \[
 \mathrm{RCRB} := \sqrt{\frac{1}{F_{\text{1-bit}}(\bX)}}, \quad \mathrm{RMSE} := \sqrt{ \mathbb{E}\left[ \left(\hat{\theta}^{\text{ML}}-\theta\right)^2\right] },
 \]
 and adopt symbol error rate (SER) as the communication performance metric.

\subsection{Verification of Theorem~\ref{theorem: Ix lower bound}}
\label{subsection:verification of theorem ix lower bound}

We first verify the upper bound on the number of non-binary elements, $|I(\hat{\bx})|$, established in Theorem~\ref{theorem: Ix lower bound} by examining the solutions obtained via Algorithm~\ref{alg:ALM}.
Specifically, we consider four scenarios with different communication SNRs, modulation orders, and SEP requirements, expressed by triplets:
\begin{enumerate}
	\centering
	\item[]\hspace{-3em} \textbf{scenario 1:} \quad (25 dB, 4-PSK, $10^{-3}$),
	\item[]\hspace{-3em} \textbf{scenario 2:} \quad (25 dB, 4-PSK, $10^{-4}$),
	\item[]\hspace{-3em} \textbf{scenario 3:} \quad (25 dB, 8-PSK, $10^{-4}$),
	\item[]\hspace{-3em} \textbf{scenario 4:} \quad (20 dB, 8-PSK, $10^{-4}$).
\end{enumerate}
These scenarios are ordered from ``easy'' to ``hard'' in the sense that the constraint in~\eqref{eq:original problem comm constraint} becomes progressively more stringent from scenario~1 to scenario~4.

\begin{figure}[t]
	\centering 
	\includegraphics[width=0.35\textwidth]{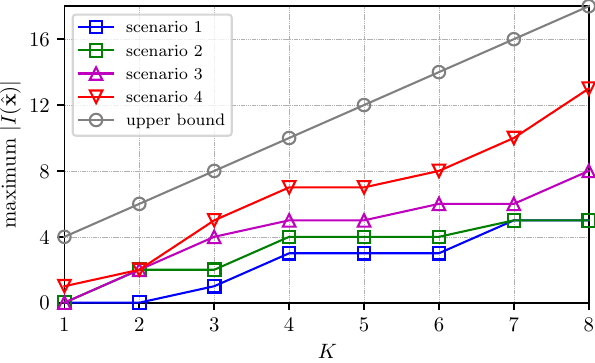}
	\caption{Maximum $|I(\hat{\bx})|$ versus the number of users $K$.}
	\label{fig:verification of lemma}
\end{figure}

We  examine the cardinality of $I(\hat{\bx})$ as a function of the number of users $K$ for the considered scenarios. For each value of $K$, we generate $1,000$ independent problem instances and plot the worst-case maximum values of $|I(\hat{\bx})|$ among these instances.
The results are shown in Fig.~\ref{fig:verification of lemma}.
It can be seen that the simulated maximum value of  $|I(\hat{\bx})|$ is below the theoretical upper bound in Theorem~\ref{theorem: Ix lower bound}, which provides numerical evidence for Theorem~\ref{theorem: Ix lower bound}.

\begin{figure}[t]
	\centering 
	\includegraphics[width=0.35\textwidth]{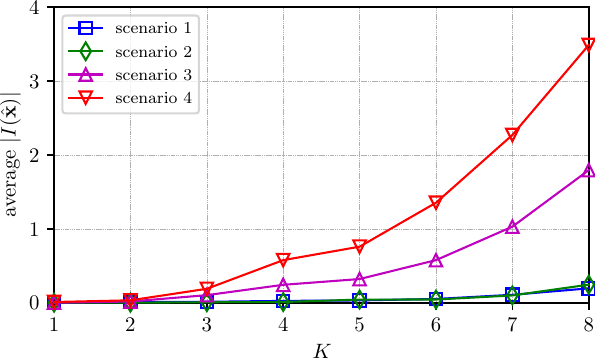}
	\caption{Average $|I(\hat{\bx})|$ versus the number of users $K$.}
	\label{fig:average Ix}
\end{figure}

We also plot the average values of $|I(\hat{\bx})|$ versus $K$ under different scenarios in Fig.~\ref{fig:average Ix}. 
By comparing Figs. \ref{fig:verification of lemma} and \ref{fig:average Ix}, it is seen that the averaged number of non-binary elements $|I(\hat{\bx})|$ is much smaller than the worst case.  
In particular, when $K$ is small to moderate, the average values for Scenarios~1-4 are close to zero, indicating that most solution elements have already fully binary entries. 
This also relieves the computational burden in the final refinement step. 

\subsection{Convergence Behavior of Algorithm~\ref{alg:ALM}}
In this subsection, we investigate the convergence behavior of Algorithm~\ref{alg:ALM}. Fig.~\ref{fig:convergence_behavior}(a) depicts the change of $\sigma(\bx, \bu, \bv, \rho_u, \rho_v)$ with respect to the outer iterations, while Fig.~\ref{fig:convergence_behavior}(b) shows the number of SPG steps required to reach a stationary point of each ALM subproblem versus the outer iteration index.
It is observed that Algorithm~\ref{alg:ALM} converges within   a dozen outer iterations. Moreover, the computational cost associated with solving each ALM subproblem reduces as the algorithm proceeds, and remains low except for the first several outer iterations.
\begin{figure}[t]
	\centering
	\subfigure[\hspace{-3em}]{\includegraphics[height=2.3in]{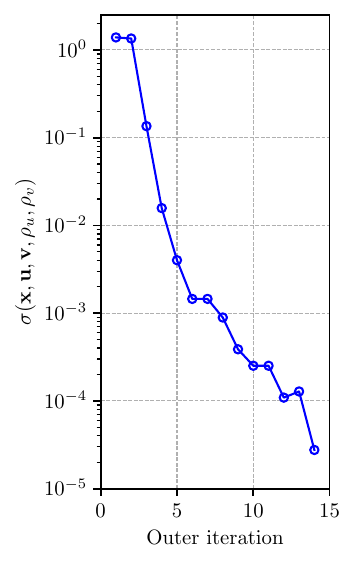}} \hspace{1em}
	\subfigure[\hspace{-3em}]{\includegraphics[height=2.3in]{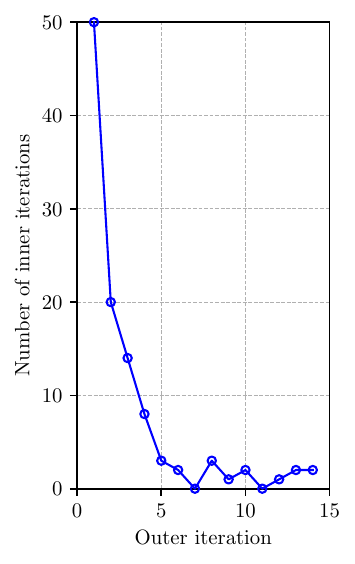}}
	\caption{Convergence behavior of Algorithm~\ref{alg:ALM}.}
	\label{fig:convergence_behavior}
\end{figure}

\subsection{Comparison with SOTA MIMO DFRC System Designs} \label{subsection:benchmark}
In this subsection, we compare our design with those tailored for MIMO DFRC systems with 1-bit DACs and high-resolution ADCs. Specifically, we consider the following benchmark schemes.
\begin{itemize}
	\item {\bf Proposed without amplitude constraints}: MIMO DFRC design with problem formulation \eqref{eq:formulation unquantized}, which differs from \eqref{eq:formulationun_ac} only in that the amplitude constraint \eqref{eq:amplitude constraint reduced} is not involved. The problem is solved using Algorithm~\ref{alg:overall}.
	\item {\bf ALM--BSUM} \cite{wu2025quantized}: Minimizing the MSE between the designed and desired beampatterns, subject to the CI-based communication QoS constraint \eqref{eq:original problem comm constraint} and binary constraint \eqref{eq:original problem amplitude constraint}, where the desired beampattern is set to
		\begin{equation}
			d(\theta)= \left\{ \begin{aligned} 
				1, & \text{~if~} \theta \in \left[\theta_0-\frac{\Delta_{\theta}}{2}, \theta_0+\Delta_{\theta}\right];  \\
				0, & \text{~otherwise},
			\end{aligned} \right.
		\end{equation}
		with $\Delta_{\theta}=10^\circ$ being the desired beamwidth. The problem is solve by the ALM, with ALM subproblems solved by a custom-built block successive upper-bound minimization (BSUM) algorithm. 
	
	\item {\bf CEO} \cite{wang2025interference}: Maximizing a weighted sum of the minimum multiuser CI scaling factors for communication and the illumination power at direction $\theta$ for radar sensing, subject to binary constraint \eqref{eq:original problem binary constraint}, where the weighting factor is carefully chosen to meet SEP requirement. The problem is solved using a cross-entropy optimization framework.	
\end{itemize}

\begin{figure}[t]
	\centering
	\includegraphics[width=0.46\textwidth]{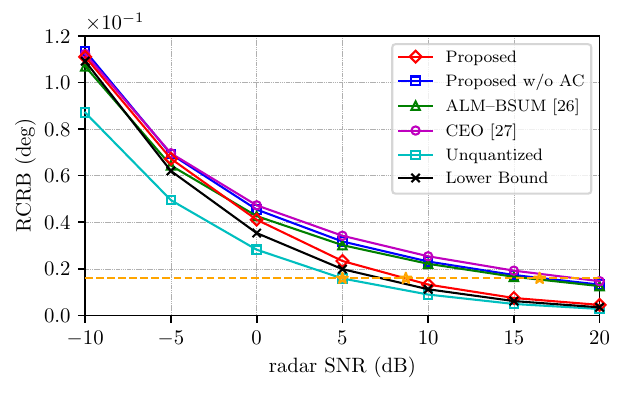}
	\caption{RCRB versus radar SNR.}
	\label{fig:RCRB_vs_radarSNR}
\end{figure}
 
\begin{figure}[t]
	\centering
	\includegraphics[width=0.46\textwidth]{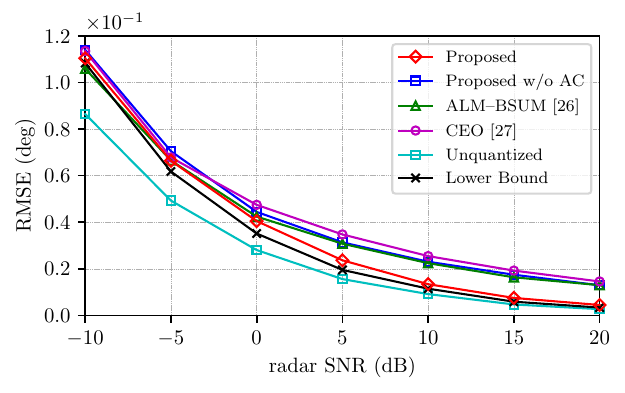}
	\caption{RMSE versus radar SNR.}
	\label{fig:RMSE_vs_radarSNR}
\end{figure}

In Fig.~\ref{fig:RCRB_vs_radarSNR} and Fig.~\ref{fig:RMSE_vs_radarSNR}, we compare the achieved RCRB and RMSE under different radar SNR levels. 
The ``Unquantized'' curve refers to the unquantized CRB with the transmit signal matrix obtained by solving problem~\eqref{eq:formulation unquantized} using Algorithm~\ref{alg:overall}, and the ``Lower Bound'' curve refers to the lower bound
\[
{\textnormal{RCRB}}_{\textnormal{1-bit}} \geq \sqrt{\frac{\pi}{2}} {\textnormal{RCRB}}_{\textnormal{unquant.}},
\]
which is obtained by inverting both sides of \eqref{eq:1-bit FI upper bound} and then taking the square root.

The performance of proposed method approaches the lower bound across all the tested radar SNR regimes. In addition, the proposed method outperforms the benchmark schemes significantly in the medium to high SNR regimes. In particular, as illustrated by the dashed orange line in Fig.~\ref{fig:RCRB_vs_radarSNR}, the proposed approach operating at a radar SNR of 8~dB can attain a RCRB comparable to that of a system equipped with high-resolution ADCs operating at 5~dB, and surpass the benchmark schemes operating at 16~dB under the same conditions. It is worth noting that the proposed design significantly outperforms its counterpart without the amplitude constraint, highlighting the effectiveness of the constraint in \eqref{eq:original problem amplitude constraint}.

\begin{figure}[t]
	\centering 
	\subfigure[RCRB versus SER.]{\includegraphics[height = 2.3in]{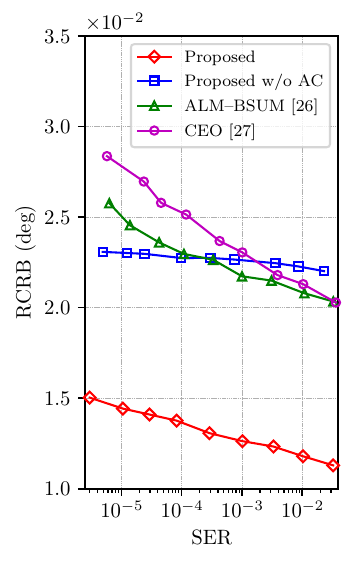}} \hspace{1.2em}
	\subfigure[RMSE versus SER.]{\includegraphics[height = 2.3in]{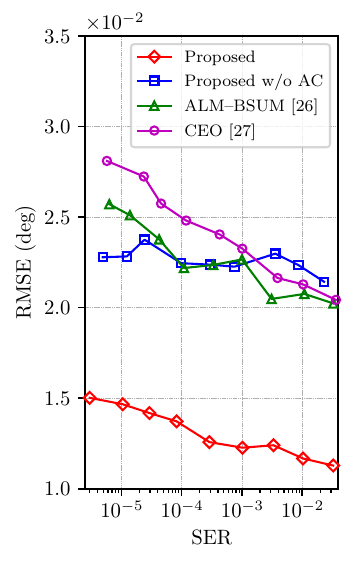}}
	\caption{Achieved radar and communication performance tradeoffs of different approches.}
	\label{fig:RCRB_and_RMSE_vs_SER}
\end{figure}

In Fig.~\ref{fig:RCRB_and_RMSE_vs_SER}, we illustrate the tradeoff between radar and communication performance for the considered approaches. 
It is seen that the proposed approach achieves lower RCRB and RMSE than the benchmark schemes under the same communication SER requirements.

\begin{table}[t]
\centering
\caption{Average running time (second) versus the number of transmit antennas $N_t$.}
\label{tab:runtime_vs_Nt}
\begin{tabular}{ccccccc}
\toprule
\multirow{3}{*}{\textbf{Algorithm}} & \multicolumn{6}{c}{$N_t$} \\
\cmidrule(lr){2-7}
 & 8 & 16 & 24 & 32 & 64 & 128 \\
\midrule
Proposed  & 0.02 & 0.06 & 0.10 & 0.18 & 0.57 & 1.86 \\
ALM-BUSM~\cite{wu2025quantized} & 0.44 & 1.22 & 2.72 & 3.11 & 4.93 & 6.21 \\
CEO~\cite{wang2025interference} & 0.11 & 0.56 & 1.38 & 3.09 & 15.82 & 79.73 \\
\bottomrule
\end{tabular}
\end{table}
 
In Table~\ref{tab:runtime_vs_Nt}, we compare the average running time of the proposed algorithm with those adopted in \cite{wu2025quantized} and \cite{wang2025interference}. The MIMO DFRC system is configured with $N_r = N_t$, $K = N_t/8$. All algorithms are implemented in Python and executed on a macOS Tahoe system with an Apple M2 Pro chip with 16 GB of memory. The proposed algorithm achieves significantly lower runtime than both benchmark algorithms, especially for large $N_t$, indicating its lower computational complexity.

\subsection{Different System Settings}

\begin{figure}[t]
	\centering
	\includegraphics[width=0.4\textwidth]{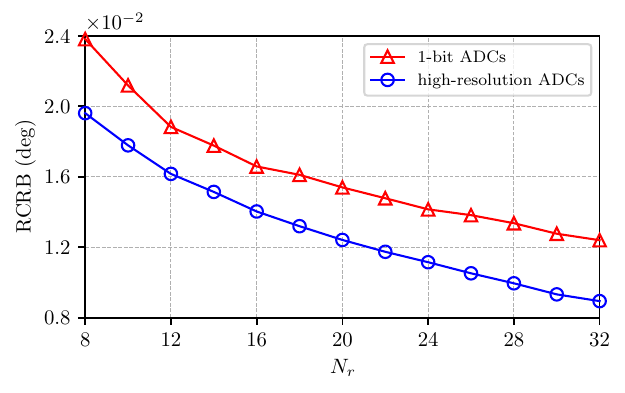}
	\caption{RCRB versus the number of receive antennas $N_r$.}
	\label{fig:RCRB_vs_Nr}
\end{figure}
\begin{figure}[t]
	\centering
	\subfigure[RCRB versus $K$.]{\includegraphics[height=2.3in]{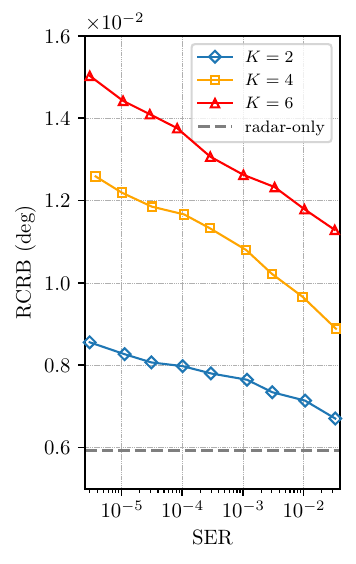}}
	\label{fig:RCRB_vs_SER}\hspace{0.2em}
	\subfigure[RMSE versus $K$.]{\includegraphics[height=2.3in]{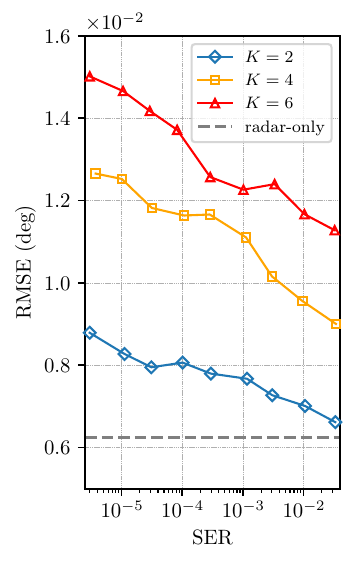}}
	\label{fig:RMSE_vs_SER}
	\caption{Achieved radar and communication performance tradeoffs for different $K$.}
	\label{fig:RCRB_and_RMSE_vs_K}
\end{figure}

In Fig.~\ref{fig:RCRB_vs_Nr}, we compare the achieved RCRB of a massive MIMO DFRC system equipped with 1-bit ADCs to that of a counterpart employing high-resolution ADCs, as a function of the number of receive antennas $N_r$. The results indicate that the system with 1-bit ADCs can attain sensing performance comparable to that of system with high-resolution ADCs, provided that 1.6 times as many receive antennas are deployed.

In Fig.~\ref{fig:RCRB_and_RMSE_vs_K}, we illustrate the tradeoff between RCRB/RMSE and SER for different numbers of users $K$. The dashed gray line is obtained by solving problem~\eqref{eq:original problem} without communication QoS constraint~\eqref{eq:original problem comm constraint}. The results show that, as $K$ increases, the communication constraint in~\eqref{eq:original problem comm constraint} becomes more restrictive, leading to a higher achieved RCRB.

\section{Conclusion}
\label{section:conclusion}
In this paper, we have investigated the DFRC design for a massive MIMO system equipped with both 1-bit DACs and ADCs. Our formulation optimizes the 1-bit transmit signals to minimize the 1-bit CRB while satisfying the communication QoS requirements. 
By exploiting the diminishing property of the 1-bit Fisher information function, we eliminated regions where the objective function is clearly suboptimal through the introduction of the amplitude constraints. We have proposed an efficient method, involving the penalty method, the ALM algorithm,  the SPG algorithm, and a dedicated refining step, to tackle the resulting nonconvex design problem. 
Simulation results verified our analytical findings and demonstrated that the proposed approach exhibits competitive performance compared to benchmark schemes.

\appendices
\section{Proof of proposition \ref{proposition:asymptotic 1-bit FI}}
\label{section:proof of proposition asymptotic}
\begin{proof}
	We first show that
	$\displaystyle \lim_{|x|\rightarrow +\infty}{\rho(x)x^2=0}$. Since $\rho(x)x^2$ is even, it suffices to consider the limit as $x\rightarrow +\infty$. We have
	\begin{equation}
		\begin{aligned}
			\lim_{x\rightarrow +\infty}{\rho(x)x^2}&= \lim_{x\rightarrow +\infty}\frac{x^2\phi^2(x)}{\Phi(x)[1-\Phi(x)]}\\
			&\overset{(a)}{=} \lim_{x\rightarrow +\infty}\frac{x^2\phi^2(x)}{1-\Phi(x)}\\
			&\overset{(b)}{=} \lim_{x\rightarrow +\infty}\frac{2x\phi^2(x)+2x^2\phi(x)\phi'(x)}{-\phi(x)} \\
			&= \lim_{x\rightarrow +\infty}-[2x\phi(x)+2x^2\phi'(x)] \\
			&= \lim_{x\rightarrow +\infty}\sqrt{\frac{2}{\pi}}(x^3-x)e^{-\frac{1}{2}x^2}=0, \\
		\end{aligned}
	\end{equation}
	where $(a)$ is due to the fact that $\lim_{x\rightarrow +\infty} \Phi(x)=1$ and $(b)$ follows from L'H\^opital's rule. 
	
	Next, we prove Proposition~\ref{proposition:asymptotic 1-bit FI}. For all $t \in [T]$ and $n \in [N_r]$, note that 
	$$\Re(r_{t,n}) = \mathbf{b}_{n}^{\T}\mathbf{x}_t, \quad \Im(r_{t,n}) = \mathbf{b}_{n+N_r}^{\T}\mathbf{x}_t,$$
	$$\frac{\partial \Re(r_{t,n})}{\partial \theta} = \mathbf{a}_{n}^{\T}\mathbf{x}_t, \quad \frac{\partial \Im(r_{t,n})}{\partial \theta} = \mathbf{a}_{n+N_r}^{\T}\mathbf{x}_t,$$
	where $\mathbf{a}_{n}^{\T}$ is the $n$-th row of the matrix $\bA$, which is defined in~\eqref{eq:matrix A}, and $\mathbf{b}_{n}^{\T}$ is the $n$-th row of the matrix $\mathbf{B}$, which is defined as
	$$
		\mathbf{B}= \begin{bmatrix}
			\Re(\tilde{\mathbf{B}}) & -\Im(\tilde{\mathbf{B}}) \\
			\Im(\tilde{\mathbf{B}}) & \Re(\tilde{\mathbf{B}})
		\end{bmatrix},
	$$
	with $\tilde{\mathbf{B}}=\beta \ba_r\ba_t^\H$. Then the 1-bit Fisher information can be expressed by
	\begin{equation}
		\begin{aligned}
		F_{\textnormal{1-bit}}(\bX)=& \sum_{t=1}^T \sum_{n=1}^{2N_r} \Bigg[\rho\left(\frac{\sqrt{2}}{\sigma_r}\mathbf{b}_n^{\T}\bx\right)\left(\frac{\sqrt{2}}{\sigma_r}\mathbf{a}_n^\T \bx\right)^2 \Bigg].
		\end{aligned}
	\end{equation}
	Hence, it suffices to show that for all $t \in [T]$ and for all $ n \in [2N_r]$, if $\mathbf{b}_n^{\T}\bx \neq 0$, it holds that
	$$\lim_{\text{SNR}_{\text{radar}}\rightarrow +\infty} \rho\left(\frac{\sqrt{2}}{\sigma_r}\mathbf{b}_n^{\T}\bx\right)\left(\frac{\sqrt{2}}{\sigma_r}\mathbf{a}_n^\T \bx\right)^2=0,$$
	which is a direct result of $\displaystyle \lim_{|x|\rightarrow +\infty}{\rho(x)x^2=0}$. This completes the proof.
\end{proof}

\section{Proof of Proposition \ref{proposition:reduction of constraints}}
\label{section:proof of proposition reduction}
\begin{proof}
	Since $\mathcal{A}(\zeta) \subseteq \mathcal{B}(\zeta)$ is trivial, it suffices to show that $\mathcal{B}(\zeta) \subseteq \mathcal{A}(\sqrt{2} \zeta)$. For all $\bx_t \in \mathcal{B}(\zeta)$, we have
	$$
		A_1(\bx_t) = \max\left\{ \left|\frac{\sqrt{2}}{\sigma_r}\Re(r_{t,1})\right|, \left|\frac{\sqrt{2}}{\sigma_r}\Im(r_{t,1})\right| \right\} \leq \zeta,
	$$
	which implies 
	$$ 
		\left|\frac{\sqrt{2}}{\sigma_r}r_{t,1}\right|=\sqrt{\left|\frac{\sqrt{2}}{\sigma_r}\Re(r_{t,1})\right|^2+\left|\frac{\sqrt{2}}{\sigma_r}\Im(r_{t,1})\right|^2} \leq \sqrt{2}\zeta.
	$$
	Since $\mathbf{r}_t=\beta\ba_r(\theta)\ba_t^{\H}(\theta)\tilde{\bx}_t$, we have
	$$
		|r_{t,n}|=|\beta||\ba_t^{\H}(\theta)\tilde{\bx}_t|, ~\forall~n \in [N_r],
	$$
	and hence 
	$$
		\left|\frac{\sqrt{2}}{\sigma_r}r_{t,n}\right| \leq \sqrt{2}\zeta,~ \forall~n \in [N_r],
	$$
	which further implies 
	$$
		\max\left\{ \left|\frac{\sqrt{2}}{\sigma_r}\Re(r_{t,n})\right|, \left|\frac{\sqrt{2}}{\sigma_r}\Im(r_{t,n})\right| \right\} \leq \sqrt{2}\zeta, \forall~n \in [N_r],
	$$
	i.e., $\bx_t \in \mathcal{A}(\sqrt{2} \zeta)$. The proof is completed.
\end{proof}

\section{Proof of Proposition \ref{proposition:well approximation}}
\label{section:proof of proposition well approximation}

\begin{proof}
	The upper bound in \eqref{eq:obj approximation} follows directly from \eqref{eq:1-bit FI upper bound}. The lower bound is derived as follows.
	
	From Proposition~\ref{proposition:reduction of constraints}, for any $\bX$ satisfying \eqref{eq:amplitude constraint reduced}, it holds that
	$$
		A_n(\bx_t) \leq \sqrt{2}\zeta, \forall~t \in [T], n \in [N_r].
	$$
	
	Furthermore, by the monotonicity of $\rho(\cdot)$ on $(0,\infty)$, we can deduce that for all $t \in [T] $ and for all $n \in [N_r]$,
\begin{equation*}
	\min\left\{\rho \left(\frac{\sqrt{2}}{\sigma_r}\Re(r_{t,n})\right), \rho \left(\frac{\sqrt{2}}{\sigma_r}\Im(r_{t,n})\right)\right\}\geq \rho(\zeta).
\end{equation*}
Hence, the 1-bit Fisher information is lower bounded as
\begin{equation*}
	\begin{aligned}
	&F_{\textnormal{1-bit}}(\bX)\\
	&=\frac{2}{\sigma_r^2} \sum_{t=1}^T \sum_{n=1}^{N_r} \Bigg[\rho\left(\frac{\sqrt{2}}{\sigma_r}\Re(r_{t,n})\right)\left(\frac{\partial \Re(r_{t,n})}{\partial \theta}\right)^2 \\
	&\hspace{1em} + \rho\left(\frac{\sqrt{2}}{\sigma_r}\Im(r_{t,n})\right) \left(\frac{\partial \Im(r_{t,n})}{\partial \theta}\right)^2 \Bigg]\\
	&\geq \frac{2}{\sigma_r^2} \rho(\sqrt{2}\zeta)\sum_{t=1}^T \sum_{n=1}^{N_r} \Bigg[\left(\frac{\partial \Re(r_{t,n})}{\partial \theta}\right)^2 +\left(\frac{\partial \Im(r_{t,n})}{\partial \theta}\right)^2 \Bigg]\\
	&=\rho(\sqrt{2}\zeta)F_{\textnormal{unquantized}}(\bX).
	\end{aligned} 
\end{equation*}
The proof is completed.
\end{proof}

\section{Proof of Theorem \ref{theorem: Ix lower bound}}
\label{section:proof of theorem Ix lower bound}

The proof of Theorem \ref{theorem: Ix lower bound} relies on the following lemma:
\begin{lemma}
	\label{lemma:extreme point}
	Consider a full-dimensional polytope $\mathcal{F}=\{\bx~\vert~\mathbf{E}\bx\leq \mathbf{f}\} \subseteq \R^{2N_t}$. For any extreme point  $\hat{\bx}$ of the set $\mathcal{F}$, there exists a subsystem $\mathbf{E}'\bx\leq \mathbf{f}'$, such that $\textnormal{rank}(\mathbf{E}')=2N_t$ and $\mathbf{E}'\hat{\bx}= \mathbf{f}'$.
\end{lemma}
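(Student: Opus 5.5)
We argue by contraposition. Suppose $\hat{\bx}\in\mathcal{F}$ and, for every subsystem $\mathbf{E}'\bx\leq\mathbf{f}'$ of rows of $\mathbf{E}\bx\leq\mathbf{f}$ satisfying $\mathbf{E}'\hat{\bx}=\mathbf{f}'$, we have $\textnormal{rank}(\mathbf{E}')<2N_t$. We will show that $\hat{\bx}$ is not extreme. The natural choice of subsystem is the full active set. Let $\mathcal{I}=\{i\,\vert\,\mathbf{e}_i^{\T}\hat{\bx}=f_i\}$ be the indices of constraints tight at $\hat{\bx}$, and let $\mathbf{E}_{\mathcal{I}}$ denote the corresponding rows. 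If $\textnormal{rank}(\mathbf{E}_{\mathcal{I}})=2N_t$, we simply take $\mathbf{E}'=\mathbf{E}_{\mathcal{I}}$ and $\mathbf{f}'=\mathbf{f}_{\mathcal{I}}$, and we are done. It therefore suffices to rule out $\textnormal{rank}(\mathbf{E}_{\mathcal{I}})<2N_t$. This covers the case $\mathcal{I}=\varnothing$ as well, where we regard the rank as $0$.

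Assume then that $\textnormal{rank}(\mathbf{E}_{\mathcal{I}})<2N_t$. Since the rank is deficient, there is a nonzero direction $\mathbf{d}\in\R^{2N_t}$ with $\mathbf{E}_{\mathcal{I}}\mathbf{d}=\mathbf{0}$. For the constraints not in $\mathcal{I}$ we have strict slack $\mathbf{e}_i^{\T}\hat{\bx}<f_i$. As there are finitely many such constraints, we can pick
\[
\epsilon=\min_{i\notin\mathcal{I},\,\mathbf{e}_i^{\T}\mathbf{d}\neq 0}\frac{f_i-\mathbf{e}_i^{\T}\hat{\bx}}{|\mathbf{e}_i^{\T}\mathbf{d}|}>0,
\]
and set $\epsilon=1$ if this index set is empty. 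Then both points $\hat{\bx}\pm\epsilon\mathbf{d}$ satisfy every inequality. The tight rows stay at equality because $\mathbf{E}_{\mathcal{I}}\mathbf{d}=\mathbf{0}$, and the slack rows remain feasible by the choice of $\epsilon$. This gives
\[
\hat{\bx}=\tfrac12(\hat{\bx}+\epsilon\mathbf{d})+\tfrac12(\hat{\bx}-\epsilon\mathbf{d}),
\]
with the two points distinct and both in $\mathcal{F}$, so $\hat{\bx}$ is not an extreme point. This contradiction shows that $\textnormal{rank}(\mathbf{E}_{\mathcal{I}})=2N_t$, and the tight subsystem serves as $\mathbf{E}'\bx\leq\mathbf{f}'$.

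The argument uses neither full-dimensionality nor boundedness of $\mathcal{F}$; those hypotheses matter only for how the lemma is later applied. The only point needing care is the choice of $\epsilon$, which requires finitely many constraints and strict slack on the inactive ones. Both hold by the definition of $\mathcal{I}$. So there is no real obstacle here: this is the standard characterization of vertices as basic feasible solutions.
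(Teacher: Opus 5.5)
Your proof is correct and follows essentially the same route as the paper's: take the maximal tight (active) subsystem at $\hat{\bx}$, and if its rank were below $2N_t$, perturb $\hat{\bx}$ along a nonzero null-space direction to write it as the midpoint of two distinct feasible points. Your explicit ratio-test choice of $\epsilon$ and your requirement of a \emph{nonzero} null vector (the paper writes only that the null space is nonempty) make the argument slightly more careful than the paper's version.
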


\begin{proof}[Proof of Lemma \ref{lemma:extreme point}] 

For any extreme point $\hat{\bx}$ of the set $\mathcal{F}$, let $\mathbf{E}'\bx\leq \mathbf{f}'$ be the maximal subsystem of $\mathbf{E}\bx\leq \mathbf{f}$ such that $\mathbf{E}'\hat{\bx}= \mathbf{f}'$. Denote the remaining subsystem by $\mathbf{E}''\bx\leq \mathbf{f}''$, we have $\mathbf{E}''\hat{\bx}< \mathbf{f}''$. Assume by contradiction that $\text{rank}(\mathbf{E}')<2N_t$, then $\text{null}(\mathbf{E}') \neq \varnothing$. For any $\bx'\in \text{null}(\mathbf{E}')$, there exists a sufficiently small $\varepsilon$, such that 
$$
	\mathbf{E}(\hat{\bx}-\epsilon \bx')\leq \mathbf{f}, \quad \mathbf{E}(\hat{\bx}+\epsilon \bx')\leq \mathbf{f},
$$ 
i.e., $\bx_1:=\hat{\bx}-\epsilon \bx'\in \mathcal{F},~\text{and}~\bx_2 := \hat{\bx}+\epsilon \bx' \in \mathcal{F}$. However, $\hat{\bx} = (\bx_1+\bx_2)/2$ contradicts with the fact that $\hat{\bx}$ is an extreme point of $\mathcal{F}$.
\end{proof}
\noindent The proof of Theorem \ref{theorem: Ix lower bound} is as follows.
\begin{proof}
The feasible set of problem \eqref{eq:penalty problem} can be expressed as $\mathcal{F}=\{\bx~\vert~ \mathbf{E}\bx\leq \mathbf{f}\}$, where 
	\begin{equation}
		\mathbf{E}=\begin{bmatrix}
			-\bC \\
			\bD \\
			-\mathbf{I}\\
			\mathbf{I}
		\end{bmatrix}, \quad \mathbf{f}=\begin{bmatrix}
			-\boldsymbol{\gamma} \\
			\boldsymbol{\zeta} \\
			-\boldsymbol{\eta} \\
			\boldsymbol{\eta}
		\end{bmatrix}, \quad \boldsymbol{\eta}=\eta\cdot \mathbf{1}.
	\end{equation}
	As $\lambda>L_f/2$, $f(\bx)-\lambda\|\bx\|^2$ is strictly concave, then for any local minimizer $\hat{\bx}$ of problem \eqref{eq:penalty problem}, $\hat{\bx}$ must be an extreme point of $\mathcal{F}$.
	
	According to Lemma~\ref{lemma:extreme point}, there exists a subsystem $\mathbf{E}'\bx\leq \mathbf{f}'$, such that $\textnormal{rank}(\mathbf{E}')=2N_t$ and $\mathbf{E}'\hat{\bx}= \mathbf{f}'$. As rank$(\bC)\leq 2K < 2N_t$, rank$(\bD)=2$, at least $2N_t-\text{rank}(\bC)-\text{rank}(\bD)\geq 2N_t-2K-2$ inequalities in the subsystem $-\boldsymbol{\eta} \leq \bx \leq \boldsymbol{\eta}$ should hold with equality at $\hat{\bx}$. Hence, at least $2N_t-2K-2$ elements of $\hat{\bx}$ should be in $\{-\eta,\eta\}$, i.e., $I(\hat{\bx})\leq 2K+2.$ The proof is completed.
\end{proof}

\bibliographystyle{IEEEtran}
\bibliography{IEEEabrv,reference}

\begin{thebibliography}{10}
\providecommand{\url}[1]{#1}
\csname url@samestyle\endcsname
\providecommand{\newblock}{\relax}
\providecommand{\bibinfo}[2]{#2}
\providecommand{\BIBentrySTDinterwordspacing}{\spaceskip=0pt\relax}
\providecommand{\BIBentryALTinterwordstretchfactor}{4}
\providecommand{\BIBentryALTinterwordspacing}{\spaceskip=\fontdimen2\font plus
\BIBentryALTinterwordstretchfactor\fontdimen3\font minus \fontdimen4\font\relax}
\providecommand{\BIBforeignlanguage}[2]{{%
\expandafter\ifx\csname l@#1\endcsname\relax
\typeout{** WARNING: IEEEtran.bst: No hyphenation pattern has been}%
\typeout{** loaded for the language `#1'. Using the pattern for}%
\typeout{** the default language instead.}%
\else
\language=\csname l@#1\endcsname
\fi
#2}}
\providecommand{\BIBdecl}{\relax}
\BIBdecl

\bibitem{liu2018toward}
F.~Liu, L.~Zhou, C.~Masouros, A.~Li, W.~Luo, and A.~Petropulu, ``Toward dual-functional radar-communication systems: Optimal waveform design,'' \emph{{IEEE} Trans. Signal Process.}, vol.~66, no.~16, pp. 4264--4279, 2018.

\bibitem{hassanien2019dual}
A.~Hassanien, M.~G. Amin, E.~Aboutanios, and B.~Himed, ``Dual-function radar communication systems: A solution to the spectrum congestion problem,'' \emph{{IEEE} Signal Process. Mag.}, vol.~36, no.~5, pp. 115--126, 2019.

\bibitem{liu2022integrated}
F.~Liu, Y.~Cui, C.~Masouros, J.~Xu, T.~X. Han, Y.~C. Eldar, and S.~Buzzi, ``Integrated sensing and communications: Toward dual-functional wireless networks for {6G} and beyond,'' \emph{{IEEE} J. Sel. Areas Commun.}, vol.~40, no.~6, pp. 1728--1767, 2022.

\bibitem{walden1999analog}
R.~Walden, ``Analog-to-digital converter survey and analysis,'' \emph{{IEEE} J. Sel. Areas Commun.}, vol.~17, no.~4, pp. 539--550, 1999.

\bibitem{allen2011cmos}
P.~E. Allen and D.~R. Holberg, \emph{CMOS Analog Circuit Design}.\hskip 1em plus 0.5em minus 0.4em\relax Elsevier, 2011.

\bibitem{liu2024survey}
Y.-F. Liu, T.-H. Chang, M.~Hong, Z.~Wu, A.~Man-Cho~So, E.~A. Jorswieck, and W.~Yu, ``A survey of recent advances in optimization methods for wireless communications,'' vol.~42, no.~11, pp. 2992--3031, 2024.

\bibitem{saxena2017analysis}
A.~K. Saxena, I.~Fijalkow, and A.~L. Swindlehurst, ``Analysis of one-bit quantized precoding for the multiuser massive {MIMO} downlink,'' \emph{{IEEE} Trans. Signal Process.}, vol.~65, no.~17, pp. 4624--4634, 2017.

\bibitem{sohrabi2018one}
F.~Sohrabi, Y.-F. Liu, and W.~Yu, ``One-bit precoding and constellation range design for massive {MIMO} with {QAM} signaling,'' \emph{{IEEE} J. Sel. Topics Signal Process.}, vol.~12, no.~3, pp. 557--570, 2018.

\bibitem{wang2018finite}
C.-J. Wang, C.-K. Wen, S.~Jin, and S.-H. Tsai, ``Finite-alphabet precoding for massive {MU-MIMO} with low-resolution {DACs},'' \emph{{IEEE} Trans. Wireless Commun.}, vol.~17, no.~7, pp. 4706--4720, 2018.

\bibitem{shao2019framework}
M.~Shao, Q.~Li, W.-K. Ma, and A.~M.-C. So, ``A framework for one-bit and constant-envelope precoding over multiuser massive {MISO} channels,'' \emph{{IEEE} Trans. Signal Process.}, vol.~67, no.~20, pp. 5309--5324, 2019.

\bibitem{wu2024efficient2}
Z.~Wu, B.~Jiang, Y.-F. Liu, M.~Shao, and Y.-H. Dai, ``Efficient {CI}-based one-bit precoding for multiuser downlink massive {MIMO} systems with {PSK} modulation,'' \emph{{IEEE} Trans. Wireless Commun.}, vol.~23, no.~5, pp. 4861--4875, 2024.

\bibitem{choi2016near}
J.~Choi, J.~Mo, and R.~W. Heath, ``Near maximum-likelihood detector and channel estimator for uplink multiuser massive {MIMO} systems with one-bit {ADCs},'' \emph{{IEEE} Trans. Commun.}, vol.~64, no.~5, pp. 2005--2018, 2016.

\bibitem{mollen2017uplink}
C.~Moll\'en, J.~Choi, E.~G. Larsson, and R.~W. Heath, ``Uplink performance of wideband massive {MIMO} with one-bit {ADCs},'' \emph{{IEEE} Trans. Wireless Commun.}, vol.~16, no.~1, pp. 87--100, 2017.

\bibitem{shao2024accelerated}
M.~Shao, W.-K. Ma, J.~Liu, and Z.~Huang, ``Accelerated and deep expectation maximization for one-bit {MIMO-OFDM} detection,'' \emph{{IEEE} Trans. Signal Process.}, vol.~72, pp. 1094--1113, 2024.

\bibitem{ni2023uplink}
Z.~Ni, J.~A. Zhang, K.~Wu, and R.~P. Liu, ``Uplink sensing using {CSI} ratio in perceptive mobile networks,'' \emph{{IEEE} Trans. Signal Process.}, vol.~71, pp. 2699--2712, 2023.

\bibitem{keskin2021mimo}
M.~F. Keskin, H.~Wymeersch, and V.~Koivunen, ``{MIMO-OFDM} joint radar-communications: Is {ICI} friend or foe?'' \emph{{IEEE} J. Sel. Topics Signal Process.}, vol.~15, no.~6, pp. 1393--1408, 2021.

\bibitem{nowak2016co}
M.~J. Nowak, Z.~Zhang, Y.~Qu, D.~A. Dessources, M.~Wicks, and Z.~Wu, ``Co-designed radar-communication using linear frequency modulation waveform,'' in \emph{IEEE Mil. Commun. Conf. (MILCOM)}, 2016, pp. 918--923.

\bibitem{roberton2003}
M.~Roberton and E.~Brown, ``Integrated radar and communications based on chirped spread-spectrum techniques,'' in \emph{IEEE MTT-S Int. Microw. Symp. Digest, 2003}, vol.~1, 2003, pp. 611--614 vol.1.

\bibitem{saddik2007ultra}
G.~N. Saddik, R.~S. Singh, and E.~R. Brown, ``Ultra-wideband multifunctional communications/radar system,'' \emph{{IEEE} Trans. Microw. Theory Techn.}, vol.~55, no.~7, pp. 1431--1437, 2007.

\bibitem{liu2021cramer}
F.~Liu, Y.-F. Liu, A.~Li, C.~Masouros, and Y.~C. Eldar, ``Cram{\'e}r--{Rao} bound optimization for joint radar-communication beamforming,'' \emph{{IEEE} Trans. Signal Process.}, vol.~70, pp. 240--253, 2021.

\bibitem{liu2022transmit}
X.~Liu, T.~Huang, and Y.~Liu, ``Transmit design for joint {MIMO} radar and multiuser communications with transmit covariance constraint,'' \emph{{IEEE} J. Sel. Areas Commun.}, vol.~40, no.~6, pp. 1932--1950, 2022.

\bibitem{wen2023efficient}
C.~Wen, Y.~Huang, and T.~N. Davidson, ``Efficient transceiver design for {MIMO} dual-function radar-communication systems,'' \emph{{IEEE} Trans. Signal Process.}, vol.~71, pp. 1786--1801, 2023.

\bibitem{wu2024efficient}
J.~Wu, Z.~Wang, Y.-F. Liu, and F.~Liu, ``Efficient global algorithms for transmit beamforming design in {ISAC} systems,'' \emph{{IEEE} Trans. Signal Process.}, vol.~72, pp. 4493--4508, 2024.

\bibitem{yu2022precoding}
X.~Yu, Q.~Yang, Z.~Xiao, H.~Chen, V.~Havyarimana, and Z.~Han, ``A precoding approach for dual-functional radar-communication system with one-bit {DACs},'' \emph{{IEEE} J. Sel. Areas Commun.}, vol.~40, no.~6, pp. 1965--1977, 2022.

\bibitem{tsinos2021joint}
C.~G. Tsinos, A.~Arora, S.~Chatzinotas, and B.~Ottersten, ``Joint transmit waveform and receive filter design for dual-function radar-communication systems,'' \emph{{IEEE} J. Sel. Topics Signal Process.}, vol.~15, no.~6, pp. 1378--1392, 2021.

\bibitem{wu2025quantized}
Z.~Wu, Y.-F. Liu, W.-K. Chen, and C.~Masouros, ``Quantized constant-envelope waveform design for massive {MIMO} {DFRC} systems,'' \emph{{IEEE} J. Sel. Areas Commun.}, vol.~43, no.~4, pp. 1056--1073, 2025.

\bibitem{wang2025interference}
Y.~Wang, X.~Hu, A.~Li, C.~Masouros, K.-K. Wong, and K.~Yang, ``Interference exploitation in {ISAC} systems: Finite-alphabet precoding with low resolution {DACs} and {PSs},'' \emph{{IEEE} Trans. Wireless Commun.}, vol.~25, pp. 2264--2279, 2026.

\bibitem{liu2021dual}
R.~Liu, M.~Li, Q.~Liu, and A.~L. Swindlehurst, ``Dual-functional radar-communication waveform design: A symbol-level precoding approach,'' \emph{{IEEE} J. Sel. Topics Signal Process.}, vol.~15, no.~6, pp. 1316--1331, 2021.

\bibitem{an2023fundamental}
J.~An, H.~Li, D.~W.~K. Ng, and C.~Yuen, ``Fundamental detection probability vs. achievable rate tradeoff in integrated sensing and communication systems,'' \emph{{IEEE} Trans. Wireless Commun.}, vol.~22, no.~12, pp. 9835--9853, 2023.

\bibitem{xu2024mimo}
C.~Xu and S.~Zhang, ``{MIMO} integrated sensing and communication exploiting prior information,'' \emph{{IEEE} J. Sel. Areas Commun.}, vol.~42, no.~9, pp. 2306--2321, 2024.

\bibitem{cheng2021transmit}
Z.~Cheng, S.~Shi, Z.~He, and B.~Liao, ``Transmit sequence design for dual-function radar-communication system with one-bit {DACs},'' \emph{{IEEE} Trans. Wireless Commun.}, vol.~20, no.~9, pp. 5846--5860, 2021.

\bibitem{wang2023joint}
B.~Wang, H.~Li, and Z.~Cheng, ``Joint transceiver design for massive {MIMO DFRC} systems with one-bit {DACs/ADCs},'' in \emph{IEEE Globecom Workshops (GC Wkshps)}, 2023, pp. 649--654.

\bibitem{sun2026one}
Y.~Sun, R.~Liu, M.~Li, and Q.~Liu, ``One-bit {DAC/ADC} transceiver designs for efficient {MIMO}-{ISAC} systems,'' \emph{{IEEE} Trans. Commun.}, vol.~74, pp. 4694--4709, 2026.

\bibitem{masouros2015exploiting}
C.~Masouros and G.~Zheng, ``Exploiting known interference as green signal power for downlink beamforming optimization,'' \emph{{IEEE} Trans. Signal Process.}, vol.~63, no.~14, pp. 3628--3640, 2015.

\bibitem{wu2023diversity}
Z.~Wu, J.~Wu, W.-K. Chen, and Y.-F. Liu, ``Diversity order analysis for quantized constant envelope transmission,'' \emph{{IEEE} Open J. Signal Process.}, vol.~4, pp. 21--30, 2023.

\bibitem{skolnik2002introduction}
M.~I. Skolnik, \emph{Introduction to Radar Systems}.\hskip 1em plus 0.5em minus 0.4em\relax McGraw-Hill Education, 2002.

\bibitem{deng2022receive}
M.~Deng, Z.~Cheng, and Z.~He, ``Receive filter design for {MIMO} radar with one-bit {ADCs},'' \emph{Digit. Signal Process.}, vol. 123, p. 103363, 2022.

\bibitem{xi2020gridless}
F.~Xi, Y.~Xiang, S.~Chen, and A.~Nehorai, ``Gridless parameter estimation for one-bit {MIMO} radar with time-varying thresholds,'' \emph{{IEEE} Trans. Signal Process.}, vol.~68, pp. 1048--1063, 2020.

\bibitem{stoica2021cramer}
P.~Stoica, X.~Shang, and Y.~Cheng, ``The {Cram\'er--Rao} bound for signal parameter estimation from quantized data [lecture notes],'' \emph{{IEEE} Signal Process. Mag.}, vol.~39, no.~1, pp. 118--125, 2021.

\bibitem{host2000effects}
A.~Host-Madsen and P.~Handel, ``Effects of sampling and quantization on single-tone frequency estimation,'' \emph{{IEEE} Trans. Signal Process.}, vol.~48, no.~3, pp. 650--662, 2000.

\bibitem{liu2024extreme}
J.~Liu, Y.~Liu, W.-K. Ma, M.~Shao, and A.~M.-C. So, ``Extreme point pursuit --- part {I}: A framework for constant modulus optimization,'' \emph{{IEEE} Trans. Signal Process.}, vol.~72, pp. 4541--4556, 2024.

\bibitem{andreani2007on}
R.~Andreani, E.~G. Birgin, J.~M. Mart\'{\i}nez, and M.~L. Schuverdt, ``On augmented {Lagrangian} methods with general lower-level constraints,'' \emph{SIAM J. Optim.}, vol.~18, no.~4, pp. 1286--1309, 2008.

\bibitem{calamai1987projected}
P.~H. Calamai and J.~J. Mor{\'e}, ``Projected gradient methods for linearly constrained problems,'' \emph{Math. Program.}, vol.~39, no.~1, pp. 93--116, 1987.

\bibitem{hager2006new}
W.~W. Hager and H.~Zhang, ``A new active set algorithm for box constrained optimization,'' \emph{SIAM J. Optim.}, vol.~17, no.~2, pp. 526--557, 2006.

\bibitem{friedlander1994new}
A.~Friedlander, J.~Mart{\'\i}nez, and S.~Santos, ``A new trust region algorithm for bound constrained minimization,'' \emph{Appl. Math. Optim.}, vol.~30, no.~3, pp. 235--266, 1994.

\bibitem{barzilai1988two}
J.~Barzilai and J.~M. Borwein, ``Two-point step size gradient methods,'' \emph{IMA J. Numer. Anal.}, vol.~8, no.~1, pp. 141--148, 1988.

\bibitem{grippo1986nonmonotone}
L.~Grippo, F.~Lampariello, and S.~Lucidi, ``A nonmonotone line search technique for {Newton}'s method,'' \emph{SIAM J. Numer. Anal.}, vol.~23, no.~4, pp. 707--716, 1986.

\bibitem{galvan2020alternating}
G.~Galvan, M.~Lapucci, T.~Levato, and M.~Sciandrone, ``An alternating augmented {Lagrangian} method for constrained nonconvex optimization,'' \emph{Optim. Methods Softw.}, vol.~35, no.~3, pp. 502--520, 2020.

\end{thebibliography}

\end{document}